\apptocmd{\sloppy}{\hbadness 10000\relax}{}{}
\newcommand{\ourtitle}{Metric Temporal Extensions of \textit{DL-Lite} and Interval-Rigid~Names}
\newtheorem{theorem}{Theorem}
\newtheorem{lemma}[theorem]{Lemma}
\newtheorem*{claim*}{Claim}
\newcommand{\lnext}{{\ocircle_F}}
\newcommand{\lalways}{\Box}
\newcommand{\luntil}{\,\mathcal{U}}
\newcommand{\aaxiom}{\ensuremath{\alpha}\xspace}
\newcommand{\as}{\ensuremath{\mathcal{W}}\xspace}
\newcommand{\bin}{\ensuremath{^{\mathsf{bin}}}\xspace}
\newcommand{\zinfty}{\ensuremath{^{0, \infty}}\xspace}
\newcommand{\mypar}[1]{\medskip\noindent\textbf{#1.}}
\newcommand{\csub}{\ensuremath{\mathbb{C}}\xspace} 
\newcommand{\fsub}{\ensuremath{\mathbb{F}}\xspace}
\newcommand{\Nbb}{\ensuremath{\mathbb{N}}\xspace}
\newcommand{\Zbb}{\ensuremath{\mathbb{Z}}\xspace}
\newcommand{\NRig}{\ensuremath{\mathsf{N_{Rig}}}\xspace}
\newcommand{\NIRig}{\ensuremath{\mathsf{N_{IRig}}}\xspace}
\newcommand{\ir}{\ensuremath{\mathsf{iRig}}\xspace}
\newcommand{\const}{\ensuremath{k_{\Phi}}\xspace}
\newcommand{\individuals}[1]{\mathsf{N_I}(#1)}
\newcommand{\cons}[1]{\mathbb{C}(#1)}
\newcommand{\EL}{\ensuremath{{\cal E\!L}}\xspace}
\newcommand{\ALC}{\ensuremath{{\cal ALC}}\xspace}
\newcommand{\NC}{\ensuremath{{\sf N_C}}\xspace}
\newcommand{\NI}{\ensuremath{{\sf N_I}}\xspace}
\newcommand{\NR}{\ensuremath{{\sf N_R}}\xspace}
\newcommand{\NP}{\textsc{NP}\xspace}
\newcommand{\PSpace}{\textsc{PSpace}\xspace}
\newcommand{\TwoExpTime}{\textsc{2-ExpTime}\xspace}
\newcommand{\NExpTime}{\textsc{NExpTime}\xspace}
\newcommand{\ExpSpace}{\textsc{ExpSpace}\xspace}
\newcommand{\TwoExpSpace}{\textsc{2-ExpSpace}\xspace}
\newcommand{\Fmc}{\ensuremath{\mathcal{F}}\xspace}
\newcommand{\Imc}{\ensuremath{\mathcal{I}}\xspace}
\newcommand{\Jmc}{\ensuremath{\mathcal{J}}\xspace}
\newcommand{\Tmc}{\ensuremath{\mathcal{T}}\xspace}
\newcommand{\Imf}{\ensuremath{\mathfrak{I}}\xspace}
\newcommand{\Rmf}{\ensuremath{\mathfrak{R}}\xspace}
\newcommand{\ind}{\ensuremath{_{\mathsf{ind}}}\xspace}
\newcommand{\indi}{\ensuremath{_{\mathsf{ind},i}}\xspace}
\newcommand{\horn}{\ensuremath{\textit{horn}}\xspace}
\newcommand{\krom}{\ensuremath{\textit{krom}}\xspace}
\newcommand{\bool}{\ensuremath{\textit{bool}}\xspace}
\newcommand{\DLLite}{\textit{DL-Lite}\xspace}
\newcommand{\DLLitecore}{\ensuremath{\textit{DL-Lite}_{\textit{core}}}\xspace}
\newcommand{\DLLitebool}{\ensuremath{\textit{DL-Lite}_{\textit{bool}}}\xspace}
\newcommand{\DLLitekrom}{\ensuremath{\textit{DL-Lite}_{\textit{krom}}}\xspace}
\newcommand{\DLLitehorn}{\ensuremath{\textit{DL-Lite}_{\textit{horn}}}\xspace}
\newcommand{\DLLitec}{\ensuremath{\textit{DL-Lite}_{c}}\xspace}
\newcommand{\LTLALC}{\text{\upshape{LTL}}\ensuremath{_\ALC}\xspace}
\newcommand{\QTL}[1]{\text{\upshape{LTL}}\ensuremath{\bin_{#1}}\xspace}
\newcommand{\QTLzi}[1]{\text{\upshape{LTL}}\ensuremath{\zinfty_{#1}}\xspace}
\newcommand{\LTLbin}{\text{\upshape{LTL}}\ensuremath{\bin}\xspace}
\newcommand{\LTLbinALC}{\QTL\ALC}
\newcommand{\LTLziALC}{\QTLzi\ALC}
\newcommand{\LTLDLL}{\text{\upshape{LTL}}\ensuremath{_{\DLLite}}\xspace}
\newcommand{\LTLDLLc}{\text{\upshape{LTL}}\ensuremath{_{\DLLite_c}}\xspace}
\newcommand{\LTLDLLbool}{\text{\upshape{LTL}}\ensuremath{_{\DLLitebool}}\xspace}
\newcommand{\LTLDLLkrom}{\text{\upshape{LTL}}\ensuremath{_{\DLLitekrom}}\xspace}
\newcommand{\LTLDLLhorn}{\text{\upshape{LTL}}\ensuremath{_{\DLLitehorn}}\xspace}
\newcommand{\LTLbinDLL}{\QTL\DLLite}
\newcommand{\LTLbinDLLc}{\QTL\DLLitec}
\newcommand{\LTLbinDLLbool}{\QTL\DLLitebool}
\newcommand{\LTLbinDLLkromhorn}{\QTL{\DLLite_{\textit{krom/horn}}}}
\newcommand{\LTLbinDLLkrom}{\QTL\DLLitekrom}
\newcommand{\LTLbinDLLhorn}{\QTL\DLLitehorn}
\newcommand{\LTLziDLLc}{\QTLzi\DLLitec}
\newcommand{\LTLziDLLbool}{\QTLzi\DLLitebool}
\newcommand{\aform}{\ensuremath{\Phi}\xspace}
\newcommand{\aformtwo}{\ensuremath{\Psi}\xspace}
\newcommand{\x}[1]{\ensuremath{\mathtt{#1}}\xspace}
\newcommand{\tabthm}{Thm.}
\newcommand{\citethm}{Thm.~}
\newcommand{\citelem}{Lem.~}
\newcommand{\pa}[1]{\ensuremath{#1^{\smash{\mathrm{p}}}}\xspace}
\newcommand{\elone}{\ensuremath{d}\xspace}
\newcommand{\eltwo}{\ensuremath{e}\xspace}
\newcommand{\aint}{\ensuremath{\mathcal{I}}\xspace}
\newcommand{\atint}{\ensuremath{\mathfrak{I}}\xspace}
\newcommand{\lnextf}{\ensuremath{\ocircle_F}\xspace}
\newcommand{\lprevp}{\ensuremath{\ocircle_P}\xspace}
\DeclareMathOperator*{\lsince}{\mathcal{S}}
\DeclareMathOperator*{\isince}{\ensuremath{\mathcal{S}^\textit{I}}\xspace}
\DeclareMathOperator*{\iuntil}{\ensuremath{\mathcal{U}^\textit{I}}\xspace}
\newcommand{\isincea}[1]{\ensuremath{\mathcal{S}^{#1}}\xspace}
\newcommand{\iuntila}[1]{\ensuremath{\mathcal{U}^{#1}}\xspace}
\newcommand{\ialways}[1]{\ensuremath{\Box^{#1}}\xspace}
\newcommand{\ialwaysf}[1]{\ensuremath{\Box^{#1}_F}\xspace}
\newcommand{\lalwaysf}{\ensuremath{\Box_F}\xspace}
\newcommand{\lalwaysp}{\ensuremath{\Box_P}\xspace}
\newcommand{\ieventf}[1]{\ensuremath{\smash{\Diamond_F^{#1}}}\xspace}
\newcommand{\leventp}{\ensuremath{\Diamond_P}\xspace}
\newcommand{\levent}{\ensuremath{\Diamond}\xspace}
\newcommand{\NRM}{{\ensuremath{\mathbb{R}}}\xspace}
	\title{\ourtitle}
\author{Veronika Thost\\
MIT-IBM Watson AI Lab, IBM Research\\
veronika.thost@ibm.com}%
\begin{document}
\maketitle
\begin{abstract}
%
 The \DLLite description logics allow for modeling domain knowledge on top of databases and 
 for efficient reasoning. 
 We focus on metric temporal extensions of \DLLitebool and its fragments, and study the complexity of satisfiability.
 In particular, we 
 investigate the influence of rigid and interval-rigid symbols, which allow for modeling knowledge that remains valid over (some) time. We show that especially the latter add considerable expressive power in many logics, but they do not always increase complexity.
\end{abstract}

%

\section{Introduction}
The \DLLite 
description logics 
allow for representing conceptual data models like UML class and ER diagrams 
formally  \cite{dllfamily}. 
%
%
%
In \DLLite, \emph{concept inclusions}~(CIs) can, for instance, capture the fact that every master's student is a student and that every student is enrolled: 
\begin{align*}
\x{MasterStudent}\sqsubseteq{}& \x{Student}\\
\x{Student}\sqsubseteq{}&\exists\x{EnrolledIn}
\intertext{%
	Here, $\x{MasterStudent}$ and $\x{Student}$ are \emph{concept names} that represent the sets of all (master's) students; $\x{EnrolledIn}$ is a \emph{role name} representing a binary
	relation connecting students to degree programs; and $\exists\x{EnrolledIn}$
	refers to the domain of that relation.
	%
	Since knowledge is often temporal, several temporal extensions of \DLLite have been investigated \cite{AKLWZ-TIME07:temporalising,AKRZ-TOCL14:cookbook,AKKRWZ-IJCAI15:omtqs,BoLT-JWS15,BoT-GCAI15,THO-DL15,dissthost}, which allow some qualitative operators of \emph{linear temporal logic} (LTL) 
	to occur within the axioms and/or for combining the axioms with such operators. For example, we may actually want to say that every student enrolled \emph{at some time in the past} ($\leventp$) and, after enrollment, pays the fee in the \emph{next} ($\lnextf$) month:} 
\x{Student}\sqsubseteq{}&\ \leventp\exists\x{EnrolledIn}\\ \exists \x{EnrolledIn}\sqsubseteq{}&\lnextf\exists\x{Payment}
\end{align*}
After the initial research on those and other temporal description logics, recent studies have also considered metric operators and hence quantitative temporal logics (TLs)~\cite{AH-InfComp93}, 
yet mostly for more expressive logics \cite{GuJO-ECAI16:metric,BaBoKOT-FroCoS2017,BrandtKKRXZ-AAAI17}, such as for the description logic~(DL) \ALC, which is propositionally complete \cite{dlhandbook07}. 
Metric operators refer to concrete intervals and allow for describing temporal information more precisely. They are clearly also interesting for \DLLite, allowing, for instance, the statement that 
the payment must happen \emph{at some time in the future, maximally three} months after enrolling: 
$$\exists\x{EnrolledIn}\sqsubseteq {}\ieventf{[0,3]}\exists\x{Payment}$$

In temporal DLs, subsets of the symbols may additionally be distinguished as so-called \emph{rigid} symbols, to describe information that does not change over time. 
For instance, while a bachelor's student may become a master's student, he or she will always stay male or female. 
Declaring concepts $\x{Male}$ and $\x{Female}$ as rigid thus allows modeling the knowledge more faithfully, but often increases reasoning complexity.
Similarly, we may consider \emph{interval-rigid} symbols, to express that certain knowledge always must remain valid for a specific period of time \cite{BaBoKOT-FroCoS2017}. To describe that a master's degree lasts at least two years, the concept $\x{MasterStudent}$ may be declared as $24$-rigid.

\begin{table*}[t]
	\centering\renewcommand{\arraystretch}{1.1}
	\begin{tabular}{l@{\hspace*{1em}}l
			@{\hspace*{1em}}l@{\hspace*{0.5em}}r@{\hspace*{1em}}l@{\hspace*{0em}}r@{\hspace*{1em}}l@{\hspace*{0em}}}
		\toprule
		Special symbols
		&none
		& rigid
		&&interval-rigid
		&&interval-rigid,
		\\
		&
		&
		&& 
		&&only global CIs\\
		\midrule
		\LTLbinDLLbool
		&*
		& \TwoExpSpace& $\ge$\tabthm\ref{thm:ltlbin-dllbool}
		& \TwoExpSpace& $\le$\tabthm\ref{thm:alc-results}
		& \ExpSpace\hspace*{0.5em}$\le$\tabthm\ref{thm:alc-results}\\
		\LTLbinDLLkromhorn
		&*
		&\ExpSpace&$\le${\small [1]}
		&\ExpSpace&$\le$\tabthm\ref{thm:ltlbin-dllhorn-dllkrom}
		&\ExpSpace\\\midrule
		%
		LTL$_{\DLLite_{\textit{horn/bool}}}$
		&*
		&\ExpSpace& {\small [2]}
		&\ExpSpace& $\le$\tabthm\ref{thm:alc-results}
		&\PSpace\hfill $\le$\tabthm\ref{thm:ltl-dllbool-global}\\		
		LTL$_{\DLLitekrom}$
		&*
		&\PSpace&$\le${\small [2]}
		&\ExpSpace& $\ge$\tabthm\ref{thm:ltl-dllkrom-lb}
		&\PSpace\\\midrule
		$\DLLite_{\textit{bool}}$-\LTLbin
		&\ExpSpace 
		&\ExpSpace& 
		&\ExpSpace&$\le$\tabthm\ref{thm:alc-results}
		&\ExpSpace
		\\\midrule
		$\DLLite_{\textit{horn/bool}}$-LTL
		&\PSpace\hspace*{0.5em}$\le$\tabthm\ref{thm:dllbool-ltl-norigid-ub}
		&\NExpTime& \tabthm\ref{thm:dllhorn-ltl-rigid-lb}
		&\ExpSpace&$\ge$\tabthm\ref{thm:dllkromhorn-ltl-lb}
		&\PSpace\\
		\DLLitekrom-LTL
		&\PSpace
		&\PSpace&
		&\ExpSpace&$\ge$\tabthm\ref{thm:dllkromhorn-ltl-lb}
		&\PSpace\\
		\bottomrule
	\end{tabular}
	\caption{Complexity of satisfiability in \LTLbinDLLbool fragments w.r.t.\ rigid and interval-rigid names. In cases $*$, rigid symbols can be 
		simulated in the logic.
		The remaining \PSpace and \ExpSpace lower bounds follow from propositional 
		LTL \protect\cite{SiCl85:ltlpspace} and  MTL 
		\protect\cite{AH-JACM94:ltlbinexpspace}. 
		{\small [1] \protect\cite{GuJO-ECAI16:metric}\quad [2] \protect\cite{AKLWZ-TIME07:temporalising}}
	}
	\label{tab:interval-rigid-one}
\end{table*}
In this paper, we 
study the combined complexity of satis\-fiability in various fragments of $\LTLbinDLLbool$, the metric temporal extension of \DLLitebool (interpreted over integer time) where TL operators may be used both within the DL axioms and for combining them; the `$\mathsf{bin}$' hints at the binary encoding of interval boundaries. 
In a second dimension, we consider rigid and interval-rigid symbols.
Our complexity results are summarized in Table~\ref{tab:interval-rigid-one}. 
Next to the Bool fragment, we consider the Krom and Horn fragments of \DLLite, and thus extend the results of \cite{AKLWZ-TIME07:temporalising} on $\LTLDLLbool$ fragments regarding the metric operators and interval-rigid symbols. 
\LTLbinDLLbool is generally as expressive as \LTLDLLbool but exponentially more succinct.
Observe that satisfiability in \LTLbinDLLhorn is \ExpSpace-complete, the same as in \LTLDLLhorn \cite{AKLWZ-TIME07:temporalising}, but better than 
in \LTLbinALC, where we have \TwoExpSpace \cite{GuJO-ECAI16:metric,BaBoKOT-FroCoS2017}. 
%
Targeting better complexities, we also investigate the 
fragments where the TL operators must not occur within the DL axioms, denoted DL-TL (e.g., \DLLitebool-\LTLbin),
and, on the other hand, the restriction to \emph{global} CIs, which must not be combined arbitrarily 
but have to be always satisfied. 
%
%
Most importantly,
we show that the 
interval-rigid symbols cannot only be used to simulate rigid symbols, but also that their
expressive power 
may lead beyond the spirit of \DLLite: if the DL axioms can be arbitrarily combined with LTL operators, then interval-rigid symbols can be used to express the $\lnextf$-operator, and conjunction and disjunction 
in 
global CIs, so that 
many fragments collapse. 
%
%
We can show containment in \ExpSpace in most cases (note that \DLLitehorn-\LTLbin$\subseteq\LTLbinDLLhorn$ etc.). 
Moreover, we can extend the \PSpace result of \cite{AKRZ-TOCL14:cookbook} for \LTLDLLbool restricted to global CIs to the setting with interval-rigid names; note that this restriction does not seem to hurt in many applications, such as with conceptual modeling (see the introductory examples). Containment in \PSpace can also be shown for \DLLitebool-LTL, but then rigid symbols yield a surprising jump in complexity. Nevertheless, this result contrasts the \TwoExpTime-completeness we have for 
\ALC-LTL in this setting 
\cite{BaGL-TOCL12}. 
Our results strongly depend on the fact that, in some dialects, 
interval (and hence metric) operators must not occur within concepts, and on a result we show:
interval-rigid roles can be simulated through corresponding concepts.
This simplifies reasoning and shows that the \DLLite fragments represent rather special DLs
also in the quantitative temporal setting. 
All the results also hold if we consider \Nbb instead of \Zbb as the temporal dimension.

%
\section{Preliminaries}
\label{sec:preliminaries}

We first introduce \LTLbinDLLc and $\DLLitec$ for~$c\in\{\text{\textit{bool,horn,krom}}\}$ 
and establish the relation to \LTLbinALC.
%
%

\mypar{Syntax}
Let \NC, \NR and \NI be countably infinite sets of \emph{concept names},
\emph{role names}, and \emph{individual names}, respectively.
In $\DLLitec$, 
\emph{roles} and 
\emph{(basic) concepts} are defined 
as follows, where $P\in\NR,A\in\NC$: 
\begin{align*}
R &::=P \mid P^-,& 
B &:=\top\mid 
A \mid \exists R
\end{align*}
where $\cdot^-$ is the \emph{inverse role} constructor.

\emph{\LTLbinDLLbool concepts} are defined based on \DLLitebool concepts: 
$$C,D::= B\mid 
\lnext C \mid \lprevp C \mid C\iuntil D\mid C\isince D$$ where 
$I$ is an interval of the form 
$[c_1,c_2]$ or 
 $[c_1,\infty)$ with $c_1,c_2\in 
\Nbb$ and $c_1\leq c_2$,
given 
in \emph{binary}.
\emph{Concepts} in \LTLbinDLLhorn and \LTLbinDLLkrom
restrict \LTLbinDLLbool concepts in that they must not contain \emph{interval operators} ($\iuntil,\isince$).
%
%
%

\DLLitec \emph{axioms} are the following 
kinds of expressions:
\emph{concept inclusions} of the form
\begin{equation}
\label{eq:ci}
C_1\sqcap\dots\sqcap C_m\sqsubseteq C_{m+1}\sqcup\dots\sqcup C_{m+n}
\end{equation}
where 
$n\le1$ if $c=\horn$,  $m+n\le2$ if $c=\krom$, and $m,n\ge0$ if $c=\bool$;
and
\emph{assertions} of the form $C_1(a)$ or 
$R(a,b)$ with 
$C_1,\ldots,$ $C_{m+n}$ being \DLLitec concepts
and $a,b\in\NI$.

\LTLbinDLLc \emph{axioms} are defined accordingly but based on \LTLbinDLLc concepts $C_1,\ldots,$ $C_{m+n}$.

%

%
%
Define \LTLbinDLLc \emph{formulas} 
as follows:
%
$$\Phi,\Psi ::= \aaxiom\mid
\lnot\Phi\mid\Phi\land\Psi\mid\lnextf\Phi\mid\lprevp\Phi\mid
\Phi\iuntil\Psi\mid
\Phi\isince\Psi$$
where~$\aaxiom$ is an \LTLbinDLLc axiom, and $I$ is as above.

\DLLitec-MTL \emph{formulas} are defined accordingly, but based on  \DLLitec axioms \aaxiom.

\LTLDLLc restricts \LTLbinDLLc in that formulas must not contain \emph{quantitative} temporal operators ($\iuntil,\isince$), but they may contain the \emph{qualitative} versions $\luntil,\lsince$;
and correspondingly for \DLLitec-LTL and \DLLitec-MTL.

As usual, we denote the empty conjunction ($\sqcap$) by $\top$ and the empty 
disjunction ($\sqcup$) by $\bot$.
%
For a given MTL$_{\DLLitebool}$ formula \aform,
the sets of individual names, role names, and roles occurring in \aform are denoted, respectively, by $\individuals\aform$, $\NR(\aform)$, and $\NRM(\aform)$; and the closure under single negation of all concepts (formulas) and subconcepts (subformulas) occurring in \aform by $\csub(\aform)$ ($\fsub(\aform)$).
Note that $\csub(\aform)$ contains negated concepts, but we only consider these sets on a syntactic level so that it is not problematic that the semantics of such expressions is not defined in some dialects.
 $|\aform|$ is the size of \aform. We may also mention the Core fragment of \DLLite, the intersection of the Horn and the Krom fragment. 
In the following, we may use the notion \DLLite to address the fragments in general.


\mypar{Semantics}
A DL \emph{interpretation} $\Imc=(\Delta^\Imc,\cdot^{\Imc})$ over a non-empty 
set $\Delta^\Imc$, called the \emph{domain}, defines an \emph{interpretation 
function}~$\cdot^{\Imc}$ that maps each 
concept name $A \in \NC$ to a subset~$A^{\Imc}$ of~$\Delta^\Imc$, each role
name $P\in \NR$ to a binary relation~$P^\Imc$ on~$\Delta^\Imc$ and each 
individual name
$a \in \NI$ to an element~$a^{\Imc}$ of~$\Delta^\Imc$, such that $a^{\Imc}\neq 
b^{\Imc}$ if $a\neq b$, for all $a,b\in\NI$ ({unique name assumption}). 
%
The mapping~$\cdot^{\Imc}$ is extended 
to roles 
by defining $(P^-)^{\Imc} := {} \{(e,d) 
\in \Delta^\Imc \times \Delta^\Imc\mid
 (d,e) \in P^{\Imc}\}$ and to \DLLite concepts:
 $$\top^{\Imc} :=  {} \Delta^\Imc, \ 
 (\exists R)^{\Imc} := \{d \in \Delta^\Imc \mid \exists  e \in \Delta^\Imc : (d,e) \in R^{\Imc}\}.$$

A \emph{(temporal DL) interpretation} is a structure
$\Imf=(\Delta^\Imf,(\Imc_i)_{i\in\Zbb})$, where 
each
$\Imc_i=(\Delta^\Imf,\cdot^{\Imc_i})$, $i\in\Zbb$, is a DL
interpretation over~$\Delta^\Imf$ ({constant domain assumption}) and $a^{\Imc_i}=a^{\Imc_j}$ for all $a\in\NI$ and $i,j \in \Zbb$ (i.e., the
interpretation of individual names is fixed).
The mappings~$\cdot^{\Imc_i}$ are extended to \LTLbinDLLbool concepts as follows:
\begin{align*}
%
  (\lnextf C)^{\Imc_i} :={}& \{ d\in\Delta^\Imf \mid d\in C^{\Imc_{i+1}} \} \\
    (\lprevp C)^{\Imc_i} :={}& \{ d\in\Delta^\Imf \mid d\in C^{\Imc_{i-1}} \} \\
    (C\iuntil D)^{\Imc_i} :={}&\{ d\in\Delta^\Imf \mid
    \exists k\in i+I: d\in D^{\Imc_k}\ \text{and}\\& 
    \forall j\in \Zbb\colon \text{if }j\in i+[0,k), \text{ then }d\in C^{\Imc_{j}} \}\\
  (C\isince D)^{\Imc_i} :={}& \{ d\in\Delta^\Imf \mid
  \exists k\in i-I: d\in D^{\Imc_k}\ \text{and}\\& 
  \forall j\in \Zbb\colon \text{if }j\in (k,i],\text{ then }d\in C^{\Imc_{j}} \}
\end{align*}
%
where $i+I$ denotes the set $\{i+j \mid j\in I\}$ for all $i\in\Zbb$ and intervals $I$ as above; $i-I$ is defined analogously. 
%
The concept $C\iuntil D$ requires $D$ to be satisfied at some
point in the interval~$I$, and $C$ to hold at all time points before that; and similar for $\isince$.
The \emph{validity} of an \LTLbinDLLbool formula~$\aform$ in~\Imf at time point
$i\in\Zbb$ (written $\Imf,i\models\aform$) is inductively defined.
For CIs, we have $  \Imf,i\models C_1\sqcap \dots\sqcap C_m\sqsubseteq D_1\sqcup \dots\sqcup D_n  \text{ iff } C_1^{\Imc_i}\cap\dots\cap C_m \subseteq D_1^{\Imc_i}\cup \dots\cup D_n^{\Imc_i}
$; and further:
\[
\begin{array}{ll
	}
  \Imf,i\models C(a) & \text{ iff } a^{\Imc_i} \in C^{\Imc_i}\\
  \Imf,i\models R(a,b) & \text{ iff } (a^{\Imc_i},b^{\Imc_i}) \in R^{\Imc_i}\\
    \Imf,i\models \neg \aform & \text{ iff not } \Imf,i\models \aform\\
  \Imf,i\models \aform \wedge \aformtwo & \text{ iff } \Imf,i\models \aform
  \text{ and }\Imf,i\models \aformtwo \\
   \Imf,i\models \lnextf \aform & \text{ iff } \Imf,i+1\models \aform \\ 
   \Imf,i\models \lprevp \aform & \text{ iff } \Imf,i-1\models \aform \\ 
  \Imf,i\models \aform \iuntil\aformtwo
  & \text{ iff } \exists k\in i+I,\ \Imf,k\models \aformtwo, 
\text{ and }\\&\quad \forall j\in   \Zbb\colon \text{if }j\in i+[0,k), \text{ then } 
\Imf,j\models \aform\\
\Imf,i\models \aform \isince \aformtwo
& \text{ iff } \exists k\in i- I,\ \Imf,k\models \aformtwo, 
\text{ and }\\&\quad \forall j\in \Zbb\colon  \text{if }j\in (k,i],\text{ then }\Imf,j\models \aform
 \end{array}
\]
The Boolean operators $\bot,\lor,$ $\rightarrow,$ and $\leftrightarrow$ 
are defined as abbreviations in the
usual way. 
We further define 
$\alpha\luntil\beta := \alpha\iuntila{[0,\infty)}\beta$,
$\ieventf{I} \alpha := \top \iuntil \alpha$,
$\ialwaysf{I} \alpha := \neg (\top \iuntil \neg\alpha)$,
$\lalwaysf \alpha := \neg (\top \luntil \neg\alpha)$, 
$\lalwaysp \alpha := \neg (\top \isincea{(\infty,0]} \neg\alpha)$, and
$\boxast\alpha:=\lalwaysp\lalwaysf\alpha$, where $\alpha,\beta$ are either 
%
concepts or
formulas~\cite{dlhandbook07,GKWZ03:manydimmodal}. 
In accordance with the notation, the empty conjunction is interpreted as $\Delta^\Jmc$, and the empty disjunction as $\emptyset$. 
We may use negated concept names $\lnot A$ in \DLLitekrom, interpreted as $\Delta^\Imf\setminus A^\Imf$, which can be simulated by a fresh name $\overline{A}\in\NC$ via CIs $\top\sqsubseteq A\sqcup \overline{A}$ and $A\sqcap\overline{A}\sqsubseteq\bot$.
We further assume that neither $\lnextf$ nor $\lprevp$ occur in assertions, in the following; this is w.l.o.g., since we always allow the operators in front of assertions on axiom level.

\mypar{Relation to \LTLDLL}
\LTLDLLc restricts \LTLbinDLLc in that it only allows the qualitative temporal operators, but that 
%
%
does actually not decrease the expressivity:
every formula
$\aform\iuntila{[c_1,c_2]}\aformtwo$ can be transformed into an equisatisfiable formula
$\bigvee_{c_1\leq i\leq c_2}
  (\lnext^i \aformtwo \wedge \bigwedge_{0\leq j < i} \lnext^j \aform)$ and similarly for
concepts and for $\isince$; $\lnext^i$ denotes a sequence of $i$ $\lnext$-operators.
Likewise, $\aform\iuntila{[c_1,\infty)}\aformtwo$ is equivalent to
$\big( \bigwedge_{0\leq i<c_1} \lnext^i \aform \big) \land \lnext^{c_1} (\aform \luntil \aformtwo)$.
However, if this transformation is recursively applied to subformulas, then the size of
the resulting formula is exponential: ignoring the nested $\lnext$-operators,
its syntax tree has polynomial depth and an exponential branching factor; and the
$\lnext^i$-formulas have exponential depth, but introduce no branching.
This blowup cannot be avoided in
general~\cite{AH-InfComp93,GuJO-ECAI16:metric}.
Yet, an interesting result for \LTLziALC, the restriction of \LTLbinALC to intervals of the form $[0,c]$ and
$[c,\infty)$, is given in \cite[Thm.~2]{BaBoKOT-FroCoS2017}, 
and that reduction works similarly in our setting with past operators: each \LTLziDLLc formula can be translated in polynomial time into an
equisatisfiable \LTLDLLc formula.
%
The reduction is particularly modular in that, if the formula contains  only global
CIs (which are formally introduced in the next paragraph), 
then this is still the case after the reduction. 

\mypar{Reasoning}
We study the complexity of the \emph{satisfiability} problem in
\LTLbinDLLbool (and in its fragments): given an \LTLbinDLLbool formula~\aform, decide if there exists an interpretation~\Imf such that
$\Imf,0\models\aform$.
Additionally, we consider a syntactic restriction proposed in~\cite{BaGL-TOCL12}.
An \LTLbinDLLbool formula \aform is a  \emph{formula with global CIs} if it is of the form
$\boxast\Tmc\wedge\aformtwo$, where \Tmc is a conjunction of CIs and \aformtwo
is an \LTLbinDLLbool formula that does not contain CIs.
\emph{Satisfiability w.r.t.\ global CIs} represents the satisfiability
problem 
w.r.t.\ such formulas. The problems and notions are correspondingly defined for the fragments of \LTLbinDLLbool.

\mypar{Rigid Names}
We especially consider a finite set $\NRig\subseteq\NC\cup\NR$ of \emph{rigid} symbols, whose interpretation must not change over time. That is,  interpretations
$\Imf=(\Delta^\Imf,(\Imc_i)_{i\in\Zbb})$ must \emph{respect} these names, meaning: $X^{\Imc_i}=X^{\Imc_j}$ for all $X\in\NRig$ and
$i,j\in\Zbb$.
%
%
%
In addition, we consider a finite set
$\NIRig\subseteq(\NC\cup\NR)\setminus\NRig$ of \emph{interval-rigid} names, each of which must remain rigid for a specific period of time, determined by a function $\ir\colon\NIRig\to\Nbb_{\ge 2}$ whose values are given in binary.
Interpretations $\Imf=(\Delta^\Imf,(\Imc_i)_{i \in \Zbb})$ must also
\emph{respect} these names, meaning: for all
$X\in\NIRig\cap\NC$ with $\ir(X)=k$ and $i\in\Zbb$: for every $d\in X^{\Imc_i}$, there is a time point $j \in \Zbb$ such that
	$i\in[j,j+k)$ and $d\in X^{\Imc_\ell}$ for all
	$\ell\in[j,j+k)$; and similarly for role names.
%
%
In the following, let 
$\const:=\max \{ \ir(X)\mid X\in\NIRig \text{ occurs in }\aform \}$. 
Intuitively, any element (or pair of elements) in the interpretation of an
interval-rigid name must be in that interpretation for at least~$k$ consecutive
time points; the name is \emph{$k$-rigid}.
The names in $(\NC\cup\NR)\setminus(\NRig\cup\NIRig)$ are
\emph{flexible}.
%
%
We investigate the complexity of satisfiability w.r.t.\ different settings, in dependence of which kinds of (interval-)rigid names may occur in the formula.

\cite{AKLWZ-TIME07:temporalising} mention that rigid roles can be simulated using temporal
constraints on unary predicates. In fact, in an \LTLDLLc formula \aform, a rigid role name $R$ can be simulated by considering $R$ to be flexible, introducing fresh 
rigid concept names $C_{\exists R}$ and $C_{\exists R^-}$, 
and extending \aform with the conjunct $ \boxast((C_{\exists R}\sqsubseteq\exists R)\land (C_{\exists R^-}\sqsubseteq\exists R^-)\land (\exists R\sqsubseteq C_{\exists R})\land (\exists R^-\sqsubseteq C_{\exists R^-}))$, and with a conjunct $(\lnot R(a,b)\lor\boxast R(a,b))$ for each role assertion $R(a,b)$ occurring in \aform. Note that this reduction even works in the Core fragment, does not require temporal operators on the DL level, and only uses global CIs. We can extend the reduction to interval rigid symbols: for every $k$-rigid role $S$, we introduce fresh $k$-rigid concept names $C_{\exists S}$ and $C_{\exists S^-}$ and CIs corresponding to the above ones, and we extend \aform
with the conjunct
$
\boxast\left( \lnot S(a,b)\to\lnext(\lnot S(a,b)\lor \ialways{[0,k)}S(a,b)) \right)$ 
for each role assertion $S(a,b)$ occurring in \aform.
An \LTLbinDLLc formula referring only to intervals of the form $[0,k)$ is in the \LTLziDLLc fragment and can be translated in polynomial time into an equisatisfiable
$\LTLDLLc$ formula, as mentioned above; note that
the interval operator does not occur on the concept level.
%
\begin{lemma}\label{lem:rig-roles-trivial}
	Satisfiability in 
	\LTLbinDLLc 
	w.r.t.\ (interval-)rigid names can be polynomially reduced to the setting
	where only (interval-)rigid concepts are given.
	%
	\qed
\end{lemma}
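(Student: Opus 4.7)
The plan is to formalize the explicit reduction sketched in the paragraph immediately before the lemma, and to verify that it preserves satisfiability with only polynomial blow-up. Given an \LTLbinDLLc formula \aform, I produce \aform' by declaring every rigid role $R$ and every interval-rigid role $S$ to be flexible; introducing fresh concept names $C_{\exists R}, C_{\exists R^-}$ (declared rigid) and $C_{\exists S}, C_{\exists S^-}$ (declared $\ir(S)$-rigid); and appending the stated global CI-conjuncts linking these concepts with $\exists R,\exists R^-,\exists S,\exists S^-$, together with the assertion-conjuncts for each $R(a,b)$ and $S(a,b)$ occurring in \aform. Since $\ir(S)$ is written in binary and the appended conjuncts only use intervals of the form $[0,\ir(S))$, $|\aform'|$ is polynomial in~$|\aform|$.

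For the easy direction, any model \Imf of \aform respecting the original role rigidities is extended to a model of \aform' by setting $C_{\exists R}^{\Imc_i} := (\exists R)^{\Imc_i}$ and analogously for the other fresh names. The rigidity (resp.\ $\ir(S)$-rigidity) of $R$ (resp.\ $S$) transfers to $\exists R$ (resp.\ $\exists S$), so the fresh concepts inherit the required rigidity, the CI-conjuncts hold by construction, and the assertion-conjuncts hold because the truth of each $R(a,b)$ persists throughout \Zbb and each $S(a,b)$-block has length at least $\ir(S)$.

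For the converse, I take a model \Jmf of \aform' and construct $\Jmf^\ast$ that coincides with \Jmf on all concept names and on originally flexible roles, but reinterprets the simulated roles. The decisive observation is that \DLLite concepts access a role only via $\exists R$, $\exists R^-$, or syntactic role assertions, so $\Jmf^\ast$ will satisfy \aform provided it agrees with \Jmf on these. For rigid $R$, I pick for every $d\in C_{\exists R}^{\Jmf}$ (a time-invariant set) a witness $e_d$ with $(d,e_d)\in R^{\Jmc_0}$, do the same in the $R^-$ direction, union in all assertion edges that were persistent in \Jmf by the added conjunct, and use the resulting edge set at every $i\in\Zbb$; the CIs then guarantee $(\exists R)^{\Jmc_i^\ast}=C_{\exists R}^{\Jmf}=(\exists R)^{\Jmc_i}$ and similarly for $\exists R^-$. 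For interval-rigid $S$ the same witness selection is carried out once per maximal block on which $C_{\exists S}$ stays true of a given element (which has length $\ge\ir(S)$ by $\ir(S)$-rigidity of $C_{\exists S}$), and asserted edges are handled using the assertion-conjunct, which forces every true-block of $S(a,b)$ to have length $\ge\ir(S)$. The main subtlety will be checking that this block-wise construction yields a genuinely $\ir(S)$-rigid role while preserving all DL-observable behaviour of $S$; this amounts to a careful case analysis on where a chosen edge is active, and is where the bulk of the bookkeeping lies.
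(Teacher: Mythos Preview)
The paper does not actually give a proof here (the \qed follows the construction directly), so you are supplying details the paper omits. Your forward direction is fine, and your treatment of the rigid-role converse is essentially correct. The real gap is in the interval-rigid converse.

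Your plan is to keep the domain of~$\Jmf$ fixed and reinterpret $S$ so that $(\exists S)^{\Jmc_i^\ast}=(\exists S)^{\Jmc_i}$ and $(\exists S^-)^{\Jmc_i^\ast}=(\exists S^-)^{\Jmc_i}$ for all~$i$, while making $S$ $k$-rigid, by selecting one witness per maximal $C_{\exists S}$-block of each element (and symmetrically). This need not be achievable. Take $k=3$, domain $\{d_1,d_2,e_1,e_2,e_3\}$, and let the $3$-rigid concepts be given by
\[
C_{\exists S}:\ d_1\mapsto\{0,1,2\},\ d_2\mapsto\{2,3,4\};\qquad
C_{\exists S^-}:\ e_1\mapsto\{0,1,2\},\ e_2\mapsto\{1,2,3\},\ e_3\mapsto\{2,3,4\},
\]
all other memberships empty. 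The added CIs $C_{\exists S}\equiv\exists S$, $C_{\exists S^-}\equiv\exists S^-$ are satisfiable with flexible~$S$ (e.g.\ $S^{\Jmc_1}=\{(d_1,e_1),(d_1,e_2)\}$), so this is a legitimate~$\Jmf\models\Phi'$. But \emph{no} $3$-rigid $S^\ast$ on this domain has these projections: at time~$1$ one needs an edge into~$e_2$; the only available source is~$d_1$; and $(d_1,e_2)$ can be present only on $\{0,1,2\}\cap\{1,2,3\}=\{1,2\}$, of length~$2<3$. Thus any $3$-rigid choice either drops $e_2$ from $(\exists S^-)^{\Jmc_1^\ast}$ or inserts $d_1$ into $(\exists S)^{\Jmc_3^\ast}$ (respectively $e_2$ into $(\exists S^-)^{\Jmc_0^\ast}$), and CIs in~$\Phi$ such as $\exists S\sqsubseteq A$ or $B\sqsubseteq\exists S^-$ can then fail in~$\Jmf^\ast$. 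So the step you label ``bookkeeping'' is not bookkeeping: on a fixed domain the required $S^\ast$ may simply not exist.

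A correct argument must do more. The natural route is to enlarge the domain with fresh anonymous witnesses (one per maximal block), and then argue that their concept types over time can be chosen consistently with all CIs \emph{and} with the interval-rigidity of the remaining concept names---which itself needs care, since a single block may have to borrow types from several different original witnesses. Alternatively, one can argue that any satisfiable~$\Phi'$ admits a model whose $C_{\exists S}$- and $C_{\exists S^-}$-block structures are already compatible. Either way this is a genuine additional idea, not the case analysis your sketch suggests.
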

We hence restrict our attention to (interval-)\ rigid \emph{concepts}. 
Since \LTLDLL allows to express rigid concepts axiomatically using CIs $A\sqsubseteq\lnext A$ and $\lnext A\sqsubseteq A$, rigid symbols are actually 
syntactic sugar in that language. This does not seem to be the case for fragments that do not allow for temporal operators within CIs, which we consider later, but, at least for the case with interval-rigid names, we will prove the contrary. 
Moreover, $k$-rigid concept names $A$ can be simulated using 
$\top\sqsubseteq A\sqcup\lnext (\lnot A\sqcup\ialways{[0,k)} A)$, 
and hence in \LTLDLLbool. 

\mypar{Relation to \LTLbinALC}
\ALC is more expressive than \DLLitebool although it does not allow for inverse roles. Also,
every \LTLbinDLLbool 
formula \aform can be transformed into an equisatisfiable \LTLbinALC formula $\aform'$ as follows. 
First, we extend the set of role names to include all inverse roles $R^-$ for which we have that the role
$R$ or $R^-$ occurs in \aform. 
Then, $\aform'$ is obtained from \aform by replacing
all occurrences of concepts of the form $\exists R$ with $R\in\NRM$ by
$\exists R.\top$%
\footnote{%
	In an interpretation~$\aint=(\Delta^\aint,\cdot^\aint)$, a concept
	$\exists R.C$ is interpreted as the set $\{d\in\Delta^\aint \mid\exists(d,e)\in R^\aint: e\in C^\aint\}$.	}; by adding a CI $\boxast(\exists R.(\lnot\exists R^-.\top)\sqsubseteq\bot)$ for each
$R\in\NRM(\aform)$ as a conjunct;
and by adding a conjunct $\boxast(R(a,b)\leftrightarrow R^-(b,a))$ for each role assertion $R(a,b)$ or $R^-(b,a)$ occurring in \aform.
%
Note that 
we use no 
temporal operators within the CIs, no metric ones, and that the CIs are global. 
\begin{restatable}{lemma}{LemRedToALC}\label{lem:red-to-alc}
	The \LTLbinDLLbool 
	formula \aform is satisfiable iff the \LTLbinALC 
	formula
	$\aform'$ is satisfiable.
\end{restatable}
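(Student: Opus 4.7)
The plan is to prove the two directions by constructing appropriate interpretations and then verifying equivalence via structural induction on the common syntactic skeleton of \aform and $\aform'$.

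For the ($\Rightarrow$) direction, given a temporal \DLLitebool interpretation $\Imf=(\Delta^\Imf,(\Imc_i)_{i\in\Zbb})$ with $\Imf,0\models\aform$, I would define an \ALC interpretation $\Jmf=(\Delta^\Imf,(\Jmc_i)_{i\in\Zbb})$ with the same domain, the same interpretation of individual and concept names, and, for every role name $P$ appearing (directly or via its inverse) in \aform, set $P^{\Jmc_i}:=P^{\Imc_i}$ and $(P^-)^{\Jmc_i}:=(P^-)^{\Imc_i}$, where the latter is the \DLLite-inverse relation. Because $(P^-)^{\Jmc_i}$ is literally $\{(e,d)\mid(d,e)\in P^{\Jmc_i}\}$, the added CIs $\boxast(\exists R.(\lnot\exists R^-.\top)\sqsubseteq\bot)$ and the conjuncts $\boxast(R(a,b)\leftrightarrow R^-(b,a))$ hold trivially. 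A straightforward induction then shows $(\exists R)^{\Imc_i}=(\exists R.\top)^{\Jmc_i}$, which propagates through all Boolean, temporal, and metric constructors, yielding $\Jmf,0\models\aform'$.

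For the ($\Leftarrow$) direction, given an \ALC interpretation $\Jmf$ with $\Jmf,0\models\aform'$, I would define $\Imf$ on the same domain with the same individuals and concepts, and for each direct role name $P$ used in \aform set $P^{\Imc_i}:=P^{\Jmc_i}$; then $(P^-)^{\Imc_i}$ is fixed by the \DLLite inverse semantics. The key verification is that $(\exists P^-)^{\Imc_i}=(\exists P^-.\top)^{\Jmc_i}$ for every $i\in\Zbb$. By definition, the left-hand side is the range of $P^{\Jmc_i}$, while the right-hand side is the domain of $(P^-)^{\Jmc_i}$. The added CI for $R=P$ gives the inclusion range$(P^{\Jmc_i})\subseteq$ dom$((P^-)^{\Jmc_i})$, and the same CI instantiated with $R=P^-$ (using the convention $(P^-)^-=P$ for the inverse that only appears inside the added axiom schema) gives the converse inclusion. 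Role assertions $R(a,b)$ are handled directly when $R$ is direct, and via the added biconditional conjuncts when $R$ is an inverse; these conjuncts exactly patch the discrepancy between the \DLLite semantics (where $R^-(a,b)$ is definitionally $R(b,a)$) and the \ALC semantics (where $R^-$ is a fresh symbol).

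With these two claims in place, a straightforward structural induction on the common skeleton of \aform and $\aform'$ lifts the concept-level equivalence through assertions, CIs, and all Boolean, temporal, and metric operators, since the translation is purely local at the concept and assertion level and commutes with the outer operators. The main obstacle is the subtle point in the backward direction: ensuring that the freshly named ALC role $R^-$ has the \emph{same} domain as the range of $R$, not merely a superset or subset. This is precisely why the CI is added symmetrically (once for $R$ and once for the symbol $R^-$), and verifying that the proof goes through requires being careful that both instances of the CI schema are actually present in $\aform'$ whenever both $R$ and $R^-$ concept-level expressions can arise during the induction.
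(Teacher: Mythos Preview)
Your forward direction is fine and matches the paper's construction. The backward direction, however, has a genuine gap.

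You set $P^{\Imc_i}:=P^{\Jmc_i}$ and then need $(\exists P^-)^{\Imc_i}=(\exists P^-.\top)^{\Jmc_i}$, i.e.\ $\mathrm{ran}(P^{\Jmc_i})=\mathrm{dom}((P^-)^{\Jmc_i})$. You claim the added CI for $R=P$ gives one inclusion and the CI for $R=P^-$ gives the converse. But the CI $\boxast(\exists R.(\lnot\exists R^-.\top)\sqsubseteq\bot)$ encodes $\mathrm{ran}(R)\subseteq\mathrm{dom}(R^-)$. Instantiated with $R=P$ this gives $\mathrm{ran}(P^{\Jmc_i})\subseteq\mathrm{dom}((P^-)^{\Jmc_i})$; instantiated with $R=P^-$ it gives $\mathrm{ran}((P^-)^{\Jmc_i})\subseteq\mathrm{dom}(P^{\Jmc_i})$, which is \emph{not} the converse of the first inclusion. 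Concretely, take $P^{\Jmc_0}=\{(a,b)\}$ and $(P^-)^{\Jmc_0}=\{(b,a),(c,a)\}$ with $c\neq b$: both CIs hold, yet $\mathrm{dom}((P^-)^{\Jmc_0})=\{b,c\}\neq\{b\}=\mathrm{ran}(P^{\Jmc_0})$. If \aform contains, say, $\lnot(\exists P^-\sqsubseteq A)$ witnessed in \Jmf only by the element~$c$, your \Imf will satisfy $\exists P^-\sqsubseteq A$ and fail to model~\aform.

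The paper avoids this by defining $P^{\Imc_i}:=P^{\Jmc_i}\cup\{(d,e)\mid(e,d)\in(P^-)^{\Jmc_i}\}$. Then $\mathrm{dom}(P^{\Imc_i})=\mathrm{dom}(P^{\Jmc_i})\cup\mathrm{ran}((P^-)^{\Jmc_i})=\mathrm{dom}(P^{\Jmc_i})$ by the CI for $P^-$, and $\mathrm{ran}(P^{\Imc_i})=\mathrm{ran}(P^{\Jmc_i})\cup\mathrm{dom}((P^-)^{\Jmc_i})=\mathrm{dom}((P^-)^{\Jmc_i})$ by the CI for $P$, so both required equalities hold. Your structural induction then goes through as you outlined; only the definition of $P^{\Imc_i}$ needs this correction.
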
 
\begin{proof}
	($\Leftarrow$)
	We consider an arbitrary model $\Imf=(\Delta,(\Imc_i)_{i\in\Zbb})$ of \aform and construct a model $\Imf'=(\Delta,(\Imc'_i)_{i\in\Zbb})$ of $\aform'$. Specifically, $\Imf'$ interprets all symbols occurring in \aform as \Imf does, and, for all $i\in[0,n]$,
	the interpretation of role names $R^-$ in $\Imc'_i$, where
	$R^-\in\NR(\aform')\setminus\NR(\aform)$, is equal to~$(R^-)^{\Imc_i}$.
	Given this definition of $\Imf'$, we obviously have that, for all $i\in\Zbb$, 
	$\Imc'_i$ satisfies an axiom \aaxiom occurring in \aform iff $\Imc_i\models\aaxiom$. Moreover, it is easy to see that the new axioms are satisfied, too. We 
	thus have $\Imf'\models\aform'$.
	
	($\Rightarrow$)
	Let now $\Imf'=(\Delta,(\Imc'_i)_{i\in \Zbb})$ be a model of $\aform'$.
	We show this direction similarly, by constructing a model $\Imf=(\Delta,(\Imc_i)_{i\in\Zbb})$ of \aform.
	In particular, we assume \Imf to have the same domain as $\Imf'$, to interpret 
	all concept names as $\Imf'$ does, and to interpret all role names 
	$R\in\NR(\aform)$ such that $R^{\Imc_i}=R^{\Imc'_i}\cup\{ (\elone,\eltwo) \mid 
	(\eltwo,\elone)\in  (R^-)^{\Imc'_i}\}$ for all $i\in\Zbb$. 
	The latter definition yields that $(\elone,\eltwo)\in R^{\Imc_i}$ if 
	$(\elone,\eltwo)\in R^{\Imc'_i}$, and $(\elone,\eltwo)\in (R^-)^{\Imc_i}$ if $(\elone,\eltwo)\in 
	(R^-)^{\Imc'_i}$ for all $i\in\Zbb$.
	Together with (i) and the definition of $R^{\Imc_i}$, we thus obtain that 
	$e\in (\exists R)^{\Imc_i}$ iff $e\in (\exists R)^{\Imc'_i}$ for 
	all $R\in\NRM(\aform)$.
	Given the latter and the fact that $\Imf'$ satisfies our extension of \aform regarding the role assertions, it is easy to see that we get $\Imc_i\models\aaxiom$ iff $\Imc'_i\models\aaxiom$ for all assertions \aaxiom occurring in \aform.
	We get the same for CIs \aaxiom.
	Hence, we have $\Imf\models\aform$.			
\end{proof}
%
The reduction yields the below membership results. 
%
%
\begin{restatable}{theorem}{ThmALCResults}
	\label{thm:alc-results}
	Satisfiability w.r.t.\ both rigid and interval-rigid names is in \ExpSpace in the following logics:
	\begin{itemize}
		\item \LTLbinDLLbool restricted to global CIs,
		\item \LTLDLLbool,
		\item \DLLitebool-\LTLbin.
	\end{itemize}
In \LTLbinDLLbool, the problem is in \TwoExpSpace.
\end{restatable}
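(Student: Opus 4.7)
The plan is to reduce each satisfiability problem to the corresponding problem in \LTLbinALC (or a suitable fragment thereof) and then invoke known complexity bounds for temporal \ALC. First, by Lemma~\ref{lem:rig-roles-trivial} I would reduce to the setting where only (interval-)rigid concepts are present; this is a polynomial reduction that preserves global-CI structure and does not introduce any temporal operators inside CIs. Then I would apply Lemma~\ref{lem:red-to-alc} to obtain, in polynomial time, an equisatisfiable \LTLbinALC formula $\aform'$. The key observation for the rest of the argument is that the additional conjuncts introduced by this second reduction are all of the form $\boxast(\aaxiom)$, where $\aaxiom$ is a pure DL axiom containing no temporal operators at all; the (interval-)rigid concept names carry over unchanged.

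Next I would verify that the reduction respects the fragment restrictions claimed in the theorem: (i)~if the input contains only global CIs, then the output also contains only global CIs, since all new CIs are added as global conjuncts wrapped by $\boxast$; (ii)~if the input is in \LTLDLLbool, then the output contains no metric operators and thus lies in \LTLALC; (iii)~if the input is in \DLLitebool-\LTLbin, then no temporal operators occur within its DL axioms, and the translation adds none, so the output lies in \ALC-\LTLbin (in fact with only global CIs); (iv)~in the unrestricted case, the output is a plain \LTLbinALC formula. In all cases the reduction is polynomial, and rigid/interval-rigid concept status is preserved.

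Finally I would invoke the existing upper bounds for temporal \ALC with rigid and interval-rigid names: for full \LTLbinALC, the \TwoExpSpace bound of \cite{GuJO-ECAI16:metric,BaBoKOT-FroCoS2017} yields the last claim of the theorem; for the three \ExpSpace cases, the corresponding known \ExpSpace bounds for \LTLbinALC restricted to global CIs, for \LTLALC, and for \ALC-\LTLbin (each with interval-rigid names) give the result via our reduction. The main obstacle I anticipate is pinning down a single reference that uniformly handles all three \ExpSpace fragments in combination with interval-rigid names; should the \ALC-\LTLbin bound not be directly available in that setting, it can be recovered by routing through the \LTLbinALC-with-global-CIs case, since \DLLitebool-\LTLbin formulas only give rise to global CIs after translation, so the third \ExpSpace claim collapses into the first.
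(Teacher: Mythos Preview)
Your approach matches the paper's: eliminate (interval-)rigid roles via Lemma~\ref{lem:rig-roles-trivial}, pass to \ALC via Lemma~\ref{lem:red-to-alc} while checking that each fragment restriction is preserved, and then appeal to the \ALC upper bounds from \cite{BaBoKOT-FroCoS2017}. Your fragment-preservation analysis is in fact more explicit than what the paper spells out.

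The one point you gloss over, and which the paper does address, is that the \ALC results in \cite{BaBoKOT-FroCoS2017} are stated only for the fragments with future operators over~$\Nbb$, whereas the present paper works with past operators over~$\Zbb$. So the cited bounds are not literally applicable after your reduction. The paper handles this by sketching the standard quasimodel argument and observing that it goes through in the two-sided setting: one obtains ultimately periodic quasimodels of shape $^\omega Q_0\, Q_1\, Q_2^\omega$, the number of distinct quasiworlds is still bounded double-exponentially, and the usual guess-and-check procedure (keeping only a constant number of quasiworlds in memory) runs in exponential space. This adaptation is routine, but you should flag it rather than treating the cited bounds as a black box.
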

\begin{proof}
	By Lemma~\ref{lem:rig-roles-trivial}, we can refer to the complexities for \ALC w.r.t.\ rigid and interval-rigid concepts only to obtain the results. 
	The corresponding results all have been shown in \cite{BaBoKOT-FroCoS2017},
	for \LTLbinALC restricted to global CIs, 
	\LTLALC, 
	\ALC-\LTLbin, 
	and for \LTLbinALC, 
	but only w.r.t.\ the fragments of the logics with only future operators and the natural numbers. 
	
	All the proofs apply the standard approach. In what follows we sketch it regarding the \ExpSpace cases; the other one is similar. It is based on the fact we can restrict the focus to a kind of 
	models---so-called quasimodels---of a special, regular shape and, specifically, to a part of double-exponential size (one version of such a proof is presented in the next section). 
	Such a quasimodel is a sequence of quasiworlds. A quasiworld describes the interpretation of all domain elements at a single time point (through closed subsets of $\csub(\aform)$) and contains a closed set of subformulas of $\aform$, those that are satisfied at the corresponding time point. Additionally, specific conditions hold for consecutive quasiworlds in a quasimodel, to guarantee that it describes the interpretations of all domain elements that have to be present to satisfy the given formula at $0$ w.r.t.\ all time points.
	In such a setting,
%
	we can restrict our focus to quasimodels of the form $^\omega Q_0 Q_1  Q_2^{\omega}$, where $Q_0,Q_1$, and $Q_2$ are sequences of quasiworlds of double-exponential length, $Q_0$ and $Q_2$ do not contain a quasiworld twice, and $Q_1$ does not contain a quasiworld more than twice. 
	In a nutshell, this is shown by merging models of the form $\dots Q'_1* w*\cdots *w*Q''_1  Q_2 Q_3$, containing a quasiworld $w$ twice, to models $\dots Q'_1* w*Q''_1Q_2 Q_3$ with $Q'_1* w*Q''_1=Q_1$. $Q_2$ is obtained from an arbitrary given sequence of quasiworlds $Q_2 Q_3$ by considering all subconcepts containing the $\luntil$-operator---w.r.t.\ all domain elements described in the model---, and also the subformulas containing $\luntil$. It can be shown that there is always a sequence as $Q_2$ (i.e., one of the length of $Q_2$), in which all these concepts are finally satisfied by the respective elements, and similar for the formulas and in the past direction; and that $^{\omega}Q_0 Q_1  Q_2^{\omega}$ describes a model of \aform. 
	%
	We regard the \ExpSpace cases.
	In the original proofs, and also in our setting, the number of different quasiworlds is bounded by
	$\sharp(\aform)=2^{2^{|\csub(\aform)|}}*|\individuals\aform|*2^{|\csub(\aform)|}*2^{|\fsub(\aform)|}$, meaning
	double exponentially in the input.
	Hence this also holds for the lengths $|Q_1|\le2\sharp(\aform)$ of $Q_1$ and $|Q_2|\le \flat^2* |\csub(\aform)|*\sharp(\aform)+|\fsub(\aform)|*\sharp(\aform)+\sharp(\aform)$ with $\flat=2^{|\csub(\aform)|}+|
	\individuals\aform|$, and for $|Q_0|$. 
	Therefore, the existence of a model of \aform can be checked while using only exponential space: 
	first, guess the starts of the periods~$n,n'\le\sharp(\aform)$ and their lengths $m,m'\le\flat^2* |\csub(\aform)|*\sharp(\aform)+|\fsub(\aform)|*\sharp(\aform)+\sharp(\aform)$ with $\flat=2^{|\csub(\aform)|}+|
	\individuals\aform|$; second, guess the two sequences of quasiworlds $w(-1),\ldots,w(-(n+m))$ (written in the order of the guessing) and $w(0),\ldots,w(n'+m')$ by guessing one 
	world after the other, respectively. Thereby only three quasiworlds have to 
	be kept in memory at a time---the ``current'' quasiworld, the previous (next) one, 
	and the first repeating one $w(n+1)$ ($w(n'+1)$)---and their sizes are exponentially 
	bounded in the size of the input.

	The proof for the \TwoExpSpace case is similar, but the models are more complex and the bound there is triple exponential.
\end{proof}
Compared to other description logics, the \DLLite logics thus present rather special cases
also in the temporal setting. This is mainly due to the facts that rigid roles can be disregarded, and that, in some dialects, 
the interval operators may not occur within concepts. It is not directly clear how this affects the complexity results. 
In the remainder of the paper, we therefore look for fragments of \LTLbinDLLbool where satisfiability is not as complex as in the corresponding \LTLbinALC fragment.

\section{\LTLbinDLL and Interval-Rigid Names}
\label{sec:temp-ops-on-all}
We begin focusing on \LTLbinDLLbool and the fragments where temporal operators may occur on both concept and axiom level.
Recall that we trivially have rigid symbols in these logics 
(Lem.~\ref{lem:rig-roles-trivial}).
Alas, the reduction of the word problem of double-exponentially space-bounded deterministic Turing machines from the \TwoExpSpace-hardness proof 
for \LTLbinALC over the natural numbers \cite[Thm.\ 5]{GuJO-ECAI16:metric} can be similarly done in \LTLbinDLLbool.
\begin{restatable}{theorem}{ThmLtlbindllbool}\label{thm:ltlbin-dllbool}
	Satisfiability in \LTLbinDLLbool without interval-rigid names is \TwoExpSpace-hard. 
\end{restatable}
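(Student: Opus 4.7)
The plan is to adapt the \TwoExpSpace-hardness reduction for \LTLbinALC over the natural numbers given in \cite[Thm.~5]{GuJO-ECAI16:metric} to \LTLbinDLLbool. Given a deterministic Turing machine $M$ with space bound $2^{2^n}$ and an input $w$ of length $n$, I would construct in polynomial time a formula $\Phi_{M,w}$ of \LTLbinDLLbool that is satisfiable iff $M$ accepts $w$; the theorem then follows.

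The encoding I have in mind follows the standard scheme enabled by binary metric intervals. The entire computation is laid out along the time axis, each configuration of $M$ occupying $2^{2^n}$ consecutive time points, one per tape cell. Concept names record the tape symbol at the current cell, whether the head is there, and, if so, the current state of $M$. To address $2^{2^n}$ cells with a polynomial-size formula, a two-level counter is employed: a $2^n$-bit cell address is stored \emph{temporally} (one bit per time point, read off fresh concept names), while the metric operator $\ieventf{[2^n,2^n]}$---of polynomial size because $2^n$ has only $n{+}1$ binary digits---is used to align corresponding bit positions, to propagate carries during increments, and to synchronize the cell at position $i$ in configuration $t$ with the cell at position $i$ in configuration $t{+}1$. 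Horizontal consistency (unchanged cells between configurations) and the local transition condition at the head (according to $\delta$) are then imposed by globally valid CIs that are Boolean combinations of these concept names.

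What makes the adaptation possible is that the original proof does not rely on genuinely \ALC-specific features: its axioms are Boolean combinations of concept names under temporal (including binary metric) operators, and the individual domain elements serve merely as clocks on which the temporal counter and tape encoding live. Qualified existentials $\exists R.C$ with non-trivial fillers, value restrictions, and nested role constructors are not needed. \DLLitebool provides exactly what we use: arbitrary Boolean combinations of basic concepts, binary metric operators at both the concept and the formula level, and two-sided CIs; neither rigid nor interval-rigid names are required, matching the hypothesis of the theorem.

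The main obstacle I expect is one of careful bookkeeping rather than any conceptual difficulty: one has to go through the formula of \cite[Thm.~5]{GuJO-ECAI16:metric} and verify that every composite concept it uses can be re-expressed in \DLLitebool, introducing fresh flexible concept names together with defining global CIs wherever an \ALC idiom is employed that is not literally a \DLLitebool concept. Since each such composite concept ultimately encodes a property of the single clock element at each time point, this rephrasing is routine and does not inflate the formula beyond polynomial size. Once the syntactic translation is in place, the correctness argument---an interpretation satisfying $\Phi_{M,w}$ describes an accepting computation of $M$ on $w$, and conversely---transfers verbatim from the original proof.
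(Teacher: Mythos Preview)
Your overall plan---adapt the \TwoExpSpace reduction of \cite[Thm.~5]{GuJO-ECAI16:metric} to \LTLbinDLLbool---is exactly what the paper does. However, your assessment of the original formula is inaccurate in a way that matters. You assert that ``qualified existentials $\exists R.C$ with non-trivial fillers \dots\ are not needed'' and that the remaining work is bookkeeping with ``fresh flexible concept names.'' In fact, the \LTLbinALC formula of \cite{GuJO-ECAI16:metric} \emph{does} contain qualified existential restrictions $\exists R.C$ on the right-hand side of CIs (sometimes under~$\levent$), and these cannot be eliminated by introducing fresh concept names alone: the \DLLite basic concept $\exists R$ is unqualified, and no Boolean combination of concept names manufactures an $R$-successor lying in~$C$.

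The paper's fix exploits a \DLLite feature you do not mention: inverse roles. All the qualified restrictions in the original formula use the same role~$R$, which is not used otherwise, so the identity of the role is irrelevant. For each such restriction $\exists R.C$, introduce a \emph{fresh role name}~$R_C$, replace $\exists R.C$ by the basic concept $\exists R_C$, and add the conjunct $\lalways(\exists R_C^-\sqsubseteq C)$. This range constraint forces every $R_C$-successor to satisfy~$C$, recovering the effect of the qualified restriction within \LTLbinDLLbool. With this replacement, the remainder of the formula already is \LTLbinDLLbool and the correctness argument transfers unchanged. So the missing ingredient in your proposal is this fresh-role-plus-range-constraint step rather than mere concept-name bookkeeping; without it, the translation you sketch does not go through.
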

\begin{proof}
	The proof proposed for \LTLbinALC 
	is based on a reduction of the word problem of double-exponentially
	space-bounded deterministic Turing machines. The \LTLbinALC formula in that proof contains qualified existential restrictions on the right of CIs, sometimes prefixed by $\levent$, but not nested and, apart from that, only constructs that are allowed in \LTLbinDLLbool. In particular, all the qualified existential restrictions are of the form $\exists R.\dots$, meaning that they all use the same role $R$. Moreover, it can readily be checked that this feature is not critical since the role is otherwise not used in the formula. That is, the hardness result depends on the element the existential restriction forces to exist but not on the kind of the relation to its predecessor. Consequently, for each such restriction $\exists R.C$, we can introduce a fresh role name $\exists R_{C}$, and then create a similar \LTLbinDLLbool formula by adding the conjuncts $\lalways(\exists R_{C}^-\sqsubseteq C)$ for the introduced role names to the \LTLbinALC formula and by replacing all concepts $\exists R.{C}$ by~$\exists R_{C}$.
\end{proof}%
Given the \TwoExpSpace-hardness, the results in Thm.~\ref{thm:alc-results} for restrictions of \LTLbinDLLbool, or  \LTLbinALC, 
are really interesting. 

Alternatively, we can regard other \DLLite fragments. 
In fact, we show that satisfiability w.r.t.\ interval-rigid names is in \ExpSpace in both \LTLbinDLLhorn and \LTLbinDLLkrom.
This is particularly the case because 
$\iuntil$ and $\isince$ do not occur in concepts there. 
%
Our proof is an extension of the one for \LTLALC \cite{WoZ-FroCoS90:temporalizing}
   regarding interval-rigid names. 
%
%
Assume $\aform'$ to be the given formula, and \aform to be the exponentially larger formula obtained from it by simulating all $\iuntil$- and $\isince$-operators (see Sec.~2). Note that $\csub(\aform)=\csub(\aform')$ since $\iuntil$ and $\isince$ here do not occur in concepts, but $\fsub(\aform)$ is exponentially larger than $ \fsub(\aform')$.

A \emph{concept type for \aform} is a set $t$ 
as follows: 
%
\begin{enumerate}[label=\textbf{T\arabic*},leftmargin=*]
	\item $\lnot\top\not\in t$;
	\item\label{ct:neg} $\neg C\in t$ iff $C\not\in t$, for all
	$\neg C\in\cons\aform$; 
	\item\label{ct:ir} $A\in t$ iff $A^i\in t$ for exactly one $i\in[1,\ir(A)]$,
	for all $A\in\cons\aform\cap\NIRig$;
	\item\label{ct:ir2} 
	%
	$A^i\in t$ for some $i\in[1,\ir(A))$ implies $\lnext A\in t$, 
	for all $A\in\cons\aform\cap\NIRig$;
	\item\label{ct:ir3} 
	%
	$A^i\in t$ for some $i\in(1,\ir(A)]$ implies $\lprevp A\in t$, 
	for all $A\in\cons\aform\cap\NIRig$;
\end{enumerate}
the names $A^i$ are used to capture how long $A$ has been satisfied already.
A \emph{named} concept type for \aform is a pair $(a,t)$ with $a\in\individuals\aform$ and a concept type $t$ for \aform.
We denote such a tuple by $t_a$ and write $C\in t_a$ instead of
$C\in t$.
\emph{Formula types} $t\subseteq\fsub(\aform)$ are defined by the
following conditions:
\begin{enumerate}[label=\textbf{T\arabic*'},leftmargin=*]
	\item\label{ft:neg} for all
	$\neg\alpha\in\fsub(\aform)$, $\neg\alpha\in t$ iff $\alpha\not\in t$; 
	\item\label{ft:fcon} for all
	$\alpha\wedge\beta\in\fsub(\aform)$, $\alpha\wedge\beta\in t$ iff $\alpha,\beta\in t$.
\end{enumerate}
Intuitively, a concept type describes the interpretation w.r.t.\ one domain element at a single time point; a formula type specifies constraints on the whole domain. 

A \emph{quasiworld} for~\aform is a triple $w=(\Tmc,\Tmc\ind,\Fmc)$, where $\Tmc$ is a set of unnamed types, 
$\Tmc\ind$ is a set of named types containing exactly one named type for each $a\in\individuals\aform$, $\Fmc$ is a formula type, and:
\begin{enumerate}[label=\textbf{W\arabic*},leftmargin=*]
	\item\label{q:gci} 
	for all $\aaxiom=(C_1\sqcap\dots\sqcap C_m\sqsubseteq C'_1\sqcup\dots\sqcup C'_n)\in\fsub(\aform)$, 
	$\aaxiom\in \Fmc$ iff $C_1,\dots, C_m\in t$ implies $\{C'_1,\dots,C'_n\}\cap t\neq\emptyset$ for all types $t\in \Tmc\cup \Tmc\ind$;
	\item\label{q:ex} for all $t\in \Tmc\cup \Tmc\ind$ and $R\in\NRM$, 
	$\exists R\in t$ iff there is a $t'\in \Tmc$ 
	with $\exists R^-\in t'$. 
	%
	\item\label{q:cass} for all $C(a)\in\fsub(\aform)$, 
	$C(a)\in \Fmc$ iff $C\in t_a$; 
	\item\label{q:rass} for all $R(a,b)\in\fsub(\aform)$, 
	$R(a,b)\in \Fmc$ implies $\exists R\in t_a$ and $\exists R^-\in t_b$.
\end{enumerate}
Note that $|\Tmc|\le 2^{|\csub(\aform)|}* \const$, $|\Tmc\ind|=|\individuals\aform|$, $|\Fmc|\le 2^{|\fsub(\aform)|}$, 
and that the number of distinct quasiworlds for \aform is double exponential and does not exceed
$$\sharp = 2^{2^{|\csub(\aform)|}* \const} *
|\individuals\aform| * 2^{|\csub(\aform)|}* \const *
2^{|\fsub(\aform)|}.
$$
A quasiworld describes an interpretation at one time point.
Regarding several time points, we first consider single sequences of types, which describe an interpretation on one element w.r.t.\ all time points.

A pair $(t,t')$ of concept types is \emph{suitable} if we have:
\begin{itemize}
	\item
	for all $\lnext C \in \cons\aform$, $\lnext C\in t$ iff $C\in t'$;
	\item
	for all $\lprevp C \in \cons\aform$, $C\in t$ iff $\lprevp C\in t'$;
	\item
	for all $A\in\cons\aform\cap\NIRig$ and $j\in[1,\ir(A)]$, $A^j\in t$ iff either $j=\ir(A)$ and  $\lnot A\in t'$, $A^j\in t'$, or $A^{j+1}\in~t'$.
\end{itemize}
Let $\mathfrak{w} =(\dots ,w_{-1},w_0,w_1,\dots)$ ($*$) be a sequence of quasiworlds $w_i=(\Tmc_i,\Tmc\indi,\Fmc_i)$ for \aform.
We denote concept types in $\Tmc\indi$ by $t^i_a$ for $a\in\individuals\aform$ and, w.l.o.g., 
assume that every concept type in $\Tmc\ind$ also occurs in \Tmc.
%
A \emph{run} in $\mathfrak{w}$ is a sequence $r$
of concept types such that, for all $i\in\Zbb$:
\begin{enumerate}[label=\textbf{R\arabic*},leftmargin=*,series=run]
	\item\label{enu:run0} $r(i)\in \Tmc_i$; 
	\item\label{enu:run1} the pair $(r(i),r({i+1}))$ is suitable;
	
	
\end{enumerate}
$r(i)$ denotes the element at index $i$ in a sequence~$r$.

Finally, a sequence $\mathfrak{w}$ of the form ($*$) is a \emph{quasimodel} for~\aform if the following hold for all $i\in\Zbb$:
\begin{enumerate}[label=\textbf{\upshape{M\arabic*}},leftmargin=*]
	\item\label{enu:m1} for all $a\in\individuals\aform$, 
	$(\dots,t_a^{-1},t_a^{0},t_a^{1},\dots)$ 
	is a run in~$\mathfrak{w}$;
	\item\label{enu:m2} for all 
	$t\in \Tmc_i$, there is a run $r$ in $\mathfrak{w}$ with $r(i)=t$;
	\item\label{enu:m3} for all 
	$\lnext\alpha\in\fsub(\aform)$, 
	$\lnext\alpha\in \Fmc_i$ iff $\alpha\in \Fmc_{i+1}$;
	\item\label{enu:m32} for all 
	$\lprevp\alpha\in\fsub(\aform)$, 
	$\alpha\in \Fmc_i$ iff $\lprevp\alpha\in \Fmc_{i-1}$;
		\item\label{enu:m4} for all $\alpha\iuntil\beta\in\fsub(\aform)$, 
		$\alpha\iuntil\beta\in \Fmc_i$ iff there is a
		 $k\in i+I,\ \Imf,k\models \beta, 
		\text{ and, for all }j\in   \Zbb\colon \text{if }j\in i+[0,k), \text{ then } 
		\Imf,j\models \alpha$; 
		\item\label{enu:m42} for all $\alpha\isince\beta\in\fsub(\aform)$, 
		$\alpha\isince\beta\in \Fmc_i$ iff there is a
		 $k\in i- I,\ \Imf,k\models \beta, 
				\text{ and, for all } j\in \Zbb\colon  \text{if }j\in (k,i],\text{ then }\Imf,j\models \alpha$;
	
	\item\label{enu:m5}\label{m:phi} 
	$\aform\in \Fmc_0$.
\end{enumerate}
\begin{restatable}{lemma}{Lemaux}
	An \LTLbinDLLkromhorn
	formula \aform is satisfiable w.r.t.\ interval-rigid names
	iff there is a quasimodel for \aform.
\end{restatable}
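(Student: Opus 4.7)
The plan is to prove both directions of the equivalence by the standard run-based correspondence between temporal DL models and quasimodels, adapted to the counter bookkeeping required by interval-rigid names.

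For the direction from model to quasimodel, suppose $\Imf=(\Delta^\Imf,(\Imc_i)_{i\in\Zbb})$ is a model of \aform respecting interval-rigid names. For each $d\in\Delta^\Imf$ and $i\in\Zbb$, I would read off the concept type $t(d,i):=\{C\in\csub(\aform)\mid d\in C^{\Imc_i}\}$, enriched with counter bits as follows: whenever $A\in\csub(\aform)$ is interval-rigid and $d\in A^{\Imc_i}$, interval-rigidity guarantees a window of at least $\ir(A)$ consecutive points during which $d$ stays in $A$; partitioning each such maximal block into sub-windows of length $\ir(A)$ (with any remainder absorbed into the last window), I fix a window containing $i$ starting at some $j_0\le i$ and add $A^{i-j_0+1}$ to $t(d,i)$. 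Let $\Tmc_i$ collect these types, let $\Tmc\indi$ consist of the corresponding named types for individuals, and define $\Fmc_i:=\{\alpha\in\fsub(\aform)\mid \Imf,i\models\alpha\}$. Verifying \textbf{T1}--\textbf{T5}, \textbf{T1'}--\textbf{T2'}, \textbf{W1}--\textbf{W4}, suitability, and \textbf{M1}--\textbf{M42} is then a direct translation of the DL and temporal semantics; \textbf{M5} holds because $\Imf,0\models\aform$.

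For the converse, given a quasimodel $\mathfrak{w}=(w_i)_{i\in\Zbb}$, I would take as domain the set of all runs in $\mathfrak{w}$, use the distinguished runs given by \textbf{M1} to interpret individual names (constantly across time), and set $A^{\Imc_i}:=\{r\mid A\in r(i)\}$ for concept names. For a role name $P$, for each run $r$ and each $\exists P\in r(i)$, \textbf{W2} supplies a type $t'\in\Tmc_i$ with $\exists P^-\in t'$ and \textbf{M2} extends $t'$ to a run $r'$; I would add $(r,r')\in P^{\Imc_i}$, and dually for $\exists P^-$. For each role assertion $R(a,b)\in\Fmc_i$, I would additionally add $(r_a,r_b)$ to $R^{\Imc_i}$, which is compatible with the types of $r_a$ and $r_b$ by \textbf{W4}.

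The main obstacle is to verify that interval-rigid names really are respected in the constructed \Imf: here the counter structure does the essential work, since suitability together with \textbf{T3}--\textbf{T5} forces the counter bit $A^j$ in one world to progress to $A^{j+1}$ in the next until $A^{\ir(A)}$ is reached, so any occurrence of $A$ in a type sits inside a block of at least $\ir(A)$ consecutive points on which $A$ is satisfied. Once that is established, satisfaction of CIs at each $i$ follows from \textbf{W1} combined with \textbf{M2}, satisfaction of assertions from \textbf{W3}--\textbf{W4}, and $\Imf,0\models\aform$ by induction on formula structure using \textbf{M3}--\textbf{M42} at the temporal cases. The construction goes through for \LTLbinDLLkromhorn precisely because \iuntil and \isince do not occur inside concepts, so concept types stay essentially Boolean combinations enriched with counters and no additional witness structure is required at the concept level.
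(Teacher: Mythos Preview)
Your proposal follows the same overall strategy as the paper's proof: types enriched with counter bits for interval-rigid names, runs as domain elements, and formula types for the outer LTL structure. The quasimodel-to-model direction is essentially identical (the paper takes the maximal role extension, putting every pair $(d_r,d_{r'})$ with $\exists R\in r(i)$ and $\exists R^-\in r'(i)$ into $R^{\Imc_i}$, rather than choosing specific witnesses, but either works).

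There is, however, a genuine gap in your model-to-quasimodel direction: your counter assignment does not satisfy the suitability condition and hence fails \textbf{R2}. If a maximal $A$-block has length strictly greater than $\ir(A)$, your partition into sub-windows creates an \emph{internal} boundary where the counter drops from $\ir(A)$ back to~$1$ while $A$ continues to hold. At such a boundary you have $A^{\ir(A)}\in r(i)$ and $A^1\in r(i+1)$, but none of the three disjuncts in the third suitability clause is satisfied (we have $A\in r(i+1)$, $A^{\ir(A)}\notin r(i+1)$, and $A^{\ir(A)+1}\notin r(i+1)$), so $(r(i),r(i+1))$ is not suitable and the sequence of types is not a run. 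Your scheme also leaves the counter undefined when the block is bi-infinite, and ``remainder absorbed into the last window'' can produce indices exceeding $\ir(A)$, violating \textbf{T3}. The paper avoids all three problems by using a \emph{saturating} counter: $A^\ell$ with $\ell=\min(i-j+1,\ir(A))$, where $j$ marks the start of the current block; the counter increments to $\ir(A)$ and then stays there until $A$ ceases to hold. This is exactly what the third suitability clause encodes. With that single change, the rest of your argument goes through and matches the paper.
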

\begin{proof}
	($\Rightarrow$) Suppose that \aform is satisfied in an interpretation $\atint=(\Delta,(\aint_i)_{i\in\Zbb})$ that respects the interval-rigid names. For every $i\in\Zbb$, we define the quasiworld $w_i=(\Tmc_i,\Tmc\indi,\Fmc_i)$ as follows:
	\begin{align*}
	\Tmc_i:={}&\{  t_{\Imc_i}(e)\mid e\in\Delta  \},\\
	\Tmc\indi:={}&\{  (a,t_{\Imc_i}(a))\mid a\in\individuals\aform \},\\
	\Fmc_i:={}&\{ \alpha\in\fsub(\aform)\mid\Imc_i\models\alpha \},
	\end{align*}
	where $t_{\Imc_i}(e):= \{ C\in\csub(\aform)\mid  \Imc_i\models C(e) \}
	\cup \{ A^\ell\in\csub(\aform)\cap\NIRig\mid
	\exists j \le i:\forall k>j:\ \text{if}\ k\le i,\ \text{then } \Imc_k\models A(e)\ \text{and}\ \ell=\min(i-j+1,\ir(A)) \}$.
	Clearly, every $w_i$ represents a quasiworld, and 
	the sequence $(\dots ,w_{-1},w_0,w_1,\dots)$ is a quasimodel for \aform.
	
	($\Rightarrow$) Let a quasimodel $\mathfrak{w}$ for \aform of the form ($*$) be given. We  define the interpretation
	$\Imf = (\Delta^\Imf, (\Imc_i)_{i\in\Zbb})$ as follows, based on the set \Rmf of 
	runs in $\mathfrak{w}$:
	\begin{align*}
	\Delta^\Imf :={}&\{d_r \mid r \in\Rmf\} \cup\individuals\aform\\
	a^{\Imc_i}  :={}& a\\
	A^{\Imc_i}  :={}& \{d_r\mid A \in r(i),\ r \in \Rmf\}\cup \{a\mid A \in t^i_a\}  \\
	R^{\Imc_i}  :={}& \{ (d_r,d_{r'}) \mid r,r'\in\Rmf, \exists R\in r(i), \exists R^-\in r'(i) \} \cup{}\\& \{ (a,b)\mid R(a,b)\in\fsub(\aform) \}\cup{}\\& \{ (a,d_{r}) \mid r\in\Rmf, \exists R^-\in r(i), \exists R\in t^i_a\}\cup{}\\& \{ (d_{r},a) \mid r\in\Rmf, \exists R\in r(i), \exists R^-\in t^i_a\}.
	\end{align*}
	Below, we sometimes denote $a\in\individuals\aform$ by $d_r$, where $r$ is the unique run for $a$ in $\mathfrak{w}$.
	%
	Given \ref{enu:run1}, we directly have that \Imf respects interval-rigid concepts.
	
	To show that \Imf is a model of \aform, we prove the following claim.
	Note that, by our assumption that $\Tmc\indi\subseteq\Tmc_i$ for all $i\in\Zbb$ and by \ref{enu:m1}, it also covers the named elements.
	%
	\begin{claim*}
		For all runs $r\in\Rmf$, concepts $C\in\csub(\aform)$ and $i\in\Zbb$, we have
		$ C \in r(i)$ iff $ d_r \in C^{\Imc_i}.$
	\end{claim*}
	\smallskip
	\noindent{\it Proof of the claim.}
	We argue by structural induction. Clearly, the claim holds for all
	concept names. 
	It thus remains to consider the operators $\exists $, $\lnext$, and $\lprevp$.
	
	Let $C=\exists R$: If $\exists R \in r(i)$, then
	by~\ref{q:ex} and \ref{enu:m2} there is an unnamed run $r'$ such that $\exists R^-\in r'(i)$.
	By the definition of~$R^{\Imc_i}$, we get $(d_r,d_{r'})\in R^{\Imc_i}$, and hence
	$d_r\in \exists R^{\Imc_i}$.
	Conversely, $d_r\in\exists R^{\Imc_i}$ directly yields $\exists R\in r(i)$ by the definition of $R^{\Imc_i}$.
	
	Let $C=\lnext D$: We have that $\lnext D \in r(i)$ iff
	$D \in r(i+1)$ (by \ref{enu:run1}) iff $d_r\in D^{\Imc_{i+1}}$ (by
	induction) iff $d_r\in(\lnext D)^{\Imc_i}$ (by the semantics).
	The proof for $\lprevp$ is analogous.
	%
	%
	Using \ref{q:gci}, \ref{q:cass}, \ref{q:rass}, \ref{enu:m3}--\ref{enu:m42}, and similar arguments as above for concepts, we can now
	show that, for all $i\in\Zbb$ and $\aformtwo\in\fsub(\aform)$, it holds that $\Imf,i\models\aformtwo$
	iff $\aformtwo\in \Fmc_i$.
	Hence, by~\ref{enu:m5}, we get that $\Imf,0\models\aform$.
	%
\end{proof}
Observe that we extend the original proof from \cite{WoZ-FroCoS90:temporalizing} only in that we consider \Zbb instead of \Nbb, intervals with the operators $\luntil$ and $\lsince$, and interval-rigid concepts;
especially the former extensions are irrelevant.
%
	It is hence possible to consider only quasimodels of the form $^{\omega}Q_0 Q_1 *w_0*Q_2 Q_3^\omega$,\footnote{For brevity, we may drop the brackets around 
		sequences.} where $Q_0,Q_1,Q_2,$ and $Q_3$ are sequences of quasiworlds of double-exponential length 
	not containing a quasiworld twice, as outlined above.
Recall that the  number $\sharp(\aform)$ of different quasiworlds is bounded double exponentially in the input $\aform'$. This yields the following result.
\begin{lemma}
	If \aform has a quasimodel, then it has a quasimodel 
	of the form
	$^{\omega}(w_{-(n+m)}\dots w_{-(n+1)}) w_{-n}\dots w_0\dots $ $ w_{n'}( w_{n'+1}\dots w_{n'+m'})^{\omega}
	$
	such that $n,n'\le\sharp(\aform)$ and $m,m'\le|\fsub(\aform)|*\sharp(\aform)+\sharp(\aform)$ are bounded double exponentially in the size
	of~$\aform'$ and~\ir. \qed
\end{lemma}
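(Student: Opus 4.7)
The plan is to transform an arbitrary quasimodel $\mathfrak{w}$ of the form $(*)$ into one of the required ultimately periodic shape via a pumping argument applied separately to the past and the future. Since there are only $\sharp(\aform)$ distinct quasiworlds, any sufficiently long sequence already exhibits repetitions that can be exploited, and the crucial structural feature I intend to use is that in \LTLbinDLLkromhorn the operators $\iuntil,\isince$ do not appear inside concepts, so every concept type already encodes all of its own temporal obligations on neighbouring quasiworlds via $\lnext$ and $\lprevp$ only.

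First I would bound the two aperiodic middle pieces $w_{-n}\dots w_{-1}$ and $w_0\dots w_{n'}$ by $\sharp(\aform)$ each. If any two middle quasiworlds coincide, say $w_i=w_j$ with $0\le i<j\le n'$, I excise the block $w_i,\dots,w_{j-1}$ and glue $w_{j-1}$'s predecessor directly to $w_j$. Since $w_i=w_j$, the sets $\Tmc_i,\Tmc\indi,\Fmc_i$ agree with $\Tmc_j,\Tmc_{\mathsf{ind},j},\Fmc_j$, so suitability \ref{enu:run1} at the new cut survives: any run $r$ arriving at $w_j$ with type $t$ is re-routed through some run that passes through $w_i$ with the same type $t$, guaranteed to exist by \ref{enu:m2}. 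Iterating excisions until no quasiworld occurs twice in the middle yields $n,n'\le\sharp(\aform)$.

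Next I would bound the period lengths by $|\fsub(\aform)|\cdot\sharp(\aform)+\sharp(\aform)$. Consider the future suffix after position $n'$; by pigeonhole some quasiworld $w_\star$ recurs infinitely often, so I pick an initial candidate period as a finite block between two consecutive occurrences. To honour all eventualities, I enumerate the $\iuntil$-formulas from $\fsub(\aform)$ (equivalently the $\luntil$-formulas after the reduction of Section~2) that appear in $\Fmc_i$ for some $i$ in the period: there are at most $|\fsub(\aform)|$ of them, and for each, \ref{enu:m4} supplies a witness $k$ with $\beta\in\Fmc_k$ which, by the excision argument above, can be assumed to lie within $\sharp(\aform)$ steps. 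Concatenating one witness segment per eventuality contributes at most $|\fsub(\aform)|\cdot\sharp(\aform)$ quasiworlds, and a final closing segment of at most $\sharp(\aform)$ quasiworlds returns to $w_\star$, giving $m'\le|\fsub(\aform)|\cdot\sharp(\aform)+\sharp(\aform)$. The symmetric construction using \ref{enu:m42} in the past direction yields the same bound on $m$.

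The main obstacle will be verifying that each excision truly preserves all quasimodel conditions, in particular \ref{enu:m1} for named runs and \ref{enu:m2} for the unnamed ones. Here the restriction to \LTLbinDLLkromhorn is essential: because no $\iuntil,\isince$ occurs inside a concept, the type $r(j)$ at a single time point determines every temporal commitment $r$ has on its immediate neighbour, so gluing at quasiworlds with identical type sets and identical named types $t_a^i$ is automatically compatible with \ref{enu:run0}--\ref{enu:run1}. Conditions \ref{enu:m3}--\ref{enu:m42} on formula types follow because the spliced formula types coincide at the cut, and \ref{enu:m5} is preserved since position $0$ lies in the untouched aperiodic middle. Combining the past and future constructions then produces the required bi-infinite quasimodel of the claimed shape with $n,n'\le\sharp(\aform)$ and $m,m'\le|\fsub(\aform)|\cdot\sharp(\aform)+\sharp(\aform)$, all double-exponential in $|\aform'|$ and $\ir$.
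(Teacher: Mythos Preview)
Your proposal is correct and follows the same standard quasimodel pumping argument the paper invokes; the paper itself gives no detailed proof here (the lemma carries a bare~$\qed$), referring instead to the sketch in the proof of Theorem~\ref{thm:alc-results} and to \cite{WoZ-FroCoS90:temporalizing}. Your key observation---that in \LTLbinDLLkromhorn the operators $\iuntil,\isince$ are absent from concepts, so runs can be spliced at identical quasiworlds using only \ref{enu:run0}--\ref{enu:run1} and no concept-level eventualities need separate handling---is precisely why the period bound here lacks the $|\csub(\aform)|$-term present in the \ALC case.
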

The existence of a model of $\aform'$ can thus again be checked by guessing the starts and lengths of the periods and the worlds, one after each other, while using only exponential space.
\begin{theorem}\label{thm:ltlbin-dllhorn}\label{thm:ltlbin-dllkrom}\label{thm:ltlbin-dllhorn-dllkrom}
	Satisfiability w.r.t.\ interval-rigid names is in \ExpSpace in
	\LTLbinDLLkromhorn. \qed
\end{theorem}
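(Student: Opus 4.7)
The plan is to combine the quasimodel characterization with the shape lemma stated just above, and then implement the check by streaming a guessed quasimodel quasiworld by quasiworld. Starting from the given \LTLbinDLLkromhorn formula $\aform'$, I first invoke the translation from Section~2 to obtain an equisatisfiable \LTLDLLkromhorn formula $\aform$ of exponential size. The key point is that in \LTLbinDLLkromhorn the interval operators $\iuntil,\isince$ do not occur inside concepts, so the translation leaves the concept-level closure untouched: $\csub(\aform)=\csub(\aform')$ has size polynomial in $|\aform'|$, while only $\fsub(\aform)$ blows up by one exponential. Consequently $2^{|\csub(\aform)|}$ is single exponential, $\const$ is at most single exponential because of binary encoding, and $2^{|\fsub(\aform)|}$ is double exponential. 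The bound $\sharp(\aform)$ on the number of distinct quasiworlds is therefore double exponential in $|\aform'|$.

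By the previous lemma, it suffices to look for a quasimodel of the form ${}^{\omega}(w_{-(n+m)}\dots w_{-(n+1)})\,w_{-n}\dots w_0\dots w_{n'}\,(w_{n'+1}\dots w_{n'+m'})^{\omega}$ with $n,n'\le\sharp(\aform)$ and $m,m'\le |\fsub(\aform)|*\sharp(\aform)+\sharp(\aform)$; these bounds are double exponential in $|\aform'|$, so their binary representations fit in exponential space. A non-deterministic procedure first guesses $n,n',m,m'$ in binary and then guesses the finite sequence $w_{-(n+m)},\dots,w_{n'+m'}$ one quasiworld at a time, writing the counters down in exponential space.

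At every step the procedure needs to keep in memory only a constant number of quasiworlds: the current quasiworld and its immediate predecessor (to check the local conditions \ref{q:gci}--\ref{q:rass}, the runs \ref{enu:run0}--\ref{enu:run1} for the named types $t_a^i$, and the step-wise formula-type conditions \ref{enu:m3}--\ref{enu:m42}), together with the two ``loop anchor'' quasiworlds $w_{-(n+1)}$ and $w_{n'+1}$ that the final past- and future-side quasiworlds must coincide with so that the periodic extension is consistent. Since a single quasiworld has exponential size (its type set $\Tmc$ is bounded by $2^{|\csub(\aform)|}*\const$ and the formula type $\Fmc$ by $2^{|\fsub(\aform)|}$, both singly exponential in $|\aform'|$), the overall working memory is exponential. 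Condition \ref{m:phi} is checked when $w_0$ is produced, and \ref{enu:m2} is ensured by requiring, for every unnamed type occurring in some $\Tmc_i$, that the run through it is realized; this is witnessed by the choice of $m,m'$ from the shape lemma (every $\luntil$/$\lsince$ eventuality and every type-witness is fulfilled within one period), so only local compliance needs to be checked online.

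The main obstacle I expect is the bookkeeping for \ref{enu:m2} and for the formula-level eventualities: when streaming, one has to verify that every unnamed concept type appearing in the current quasiworld is part of some run that is extendable both to the left (already guessed prefix) and to the right (still-to-be-guessed suffix), and that every $\iuntil$/$\isince$ or $\luntil$/$\lsince$ demand in some $\Fmc_i$ is discharged within the guessed window. The shape lemma already bakes the necessary horizon into $m$ and $m'$, so locally I only need to track, for each pending eventuality introduced while streaming, a deadline index (bounded by the double-exponential period length) and confirm its fulfilment before the deadline; these deadlines are storable in exponential space. The result is an \textsc{NExpSpace} procedure, and by Savitch's theorem this gives the desired \ExpSpace bound.
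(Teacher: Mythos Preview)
Your proposal is correct and follows the same route as the paper: translate away the formula-level interval operators (exploiting that $\csub(\aform)=\csub(\aform')$ in the Krom/Horn fragments), invoke the quasimodel characterization and the periodic-shape lemma to bound the relevant indices double-exponentially, and then stream the guessed sequence of quasiworlds while storing only a constant number of them plus the binary counters, obtaining an \textsc{NExpSpace} procedure and concluding via Savitch. The paper's argument is terser (it simply says the worlds are guessed ``one after each other, while using only exponential space''), and the bookkeeping you flag for \ref{enu:m2} and the $\luntil/\lsince$ eventualities is exactly the standard on-the-fly check the paper leaves implicit; note also that a single $\Fmc$ has size at most $|\fsub(\aform)|$, not $2^{|\fsub(\aform)|}$, which is what actually makes one quasiworld fit in exponential space.
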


\begin{figure*}[t]\centering\small
	\begin{tabular}{llllllll}
		\toprule
		$j$&\dots&-1&0&1&2&3\hfill\dots\\\midrule
		$A_0^{\Imc'_j}$ && $(C\sqcap D)^{\Imc_{0}}$& 
		$ C^{\Imc_0}$& 
		$(C\sqcap \lnot D)^{\Imc_{0}}$&
		$(C\sqcap  D)^{\Imc_3}$& 
		$ C^{\Imc_3}$\\
		$A_1^{\Imc'_j}$&&$(C\sqcap \lnot D)^{\Imc_{-2}}$& 
		$(C\sqcap D)^{\Imc_1}$& 
		$C^{\Imc_1}$&$(C\sqcap \lnot D)^{\Imc_{1}}$& $(C\sqcap D)^{\Imc_4}$& 
		\\
		$A_2^{\Imc'_j}$&& $ C^{\Imc_{-1}}$&$(C\sqcap \lnot D)^{\Imc_{-1}}$& 
		$(C\sqcap D)^{\Imc_{2}}$&
		$ C^{\Imc_2}$&$(C\sqcap \lnot D)^{\Imc_{2}}$ &	
		\\
		%
		\bottomrule
	\end{tabular}
	\caption{Outline of the interpretation of the $2$-rigid concept names $A_i$ for $i\in[0,2]$ in $\Imf'$ at time $j$, in the proof of Thm.~\ref{thm:ltl-dllkrom-lb}. }
	\label{fig:ex}
\end{figure*}
Regarding interval-rigid names, Thm.~\ref{thm:alc-results} states the best result possible for \LTLDLLbool and \LTLDLLhorn, but satisfiability in \LTLDLLkrom is in \PSpace \cite[\citethm5]{AKLWZ-TIME07:temporalising}\footnote{\cite[\citethm5]{AKLWZ-TIME07:temporalising} refer to formulas with only future operators and the natural numbers, but the proof can be extended to our version of \LTLDLLkrom.}. 
Alas, interval-rigid names destroy this result:
we can express global CIs of the form $C\sqcap D\sqsubseteq E$ ($*$) 
---the main ${\DLLitehorn}$ feature%
\footnote{It is well known that CIs of the form \eqref{eq:ci} with $n=1$ can be simulated by CIs of the same form where $m=2$.}---using the following formula $\aformtwo$: \[\boxast\Bigg(\bigvee_{0\le i \le 2} \Psi_i\Bigg)
\land\boxast\Bigg(\bigwedge_{0\le i \le 2} \Psi_i\to\lnext\Psi_{i\oplus_31}\Bigg)\]
$\Psi_i$  is the conjunction of the following CIs: 
\begin{align*}
C\sqsubseteq{}& A_i&
D\sqsubseteq{}&\lnot \lnext A_i&
\lprevp A_{i}\sqsubseteq{}&  E 
\end{align*}
with 
$A^j$, $j\in[0,2]$, being fresh, $2$-rigid concept names; we use 
${\oplus_i{}}$ ($\ominus_i{}$) to denote addition (subtraction) modulo~$i$. 
Then, every model $\Imf=(\Delta,(\Imc_i)_{i\in \Zbb})$ of ($*$) can be extended to a model $\Imf'=(\Delta,(\Imc'_i)_{i\in \Zbb})$ of \aformtwo by interpreting the new 
names as outlined in Figure~\ref{fig:ex}. 

\begin{restatable}{theorem}{ThmLTLDLLkromLB}
\label{thm:ltl-dllkrom-lb}
	Satisfiability in \LTLDLLkrom w.r.t.\ interval-rigid names is 
	\ExpSpace-hard.
\end{restatable}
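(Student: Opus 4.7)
The plan is to reduce from satisfiability in \LTLDLLhorn, which is \ExpSpace-hard by \cite{AKLWZ-TIME07:temporalising} (and whose standard hardness proof uses only global CIs). The central tool is already spelled out just above the theorem: the formula $\aformtwo$ uses only Krom-style CIs together with three fresh $2$-rigid concept names, yet globally enforces one Horn CI of the form $C \sqcap D \sqsubseteq E$.

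Given an \LTLDLLhorn input $\Phi = \boxast \Tmc \land \Psi$, I would first normalise $\Tmc$ so that every CI has one of the shapes $\top \sqsubseteq E$, $C \sqsubseteq E$, or $C \sqcap D \sqsubseteq E$; CIs with more than two left-hand conjuncts are broken up by the standard \DLLite auxiliary-concept trick noted in the footnote to~\eqref{eq:ci}. Krom-shaped CIs are kept as is, and for each residual binary CI $\alpha_j = (C_j \sqcap D_j \sqsubseteq E_j)$ I introduce fresh $2$-rigid concept names $A_0^j, A_1^j, A_2^j$ and a corresponding copy $\aformtwo_j$ of the displayed gadget. The reduced formula is $\Phi' := \Psi \land \boxast \Tmc_{\mathsf{krom}} \land \bigwedge_j \aformtwo_j$, where $\Tmc_{\mathsf{krom}}$ collects the CIs of $\Tmc$ that were already Krom-shaped; subformulas of the form $D \sqsubseteq \lnot \lnext A_k^j$ are rewritten as the Krom CI $D \sqcap \lnext A_k^j \sqsubseteq \bot$, and any negated concept name is compiled away via the standard simulation from the Preliminaries. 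The whole construction is polynomial and yields a syntactically valid \LTLDLLkrom formula whose interval-rigid names are exactly the $A_k^j$.

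Equisatisfiability then has two directions. The forward direction is by extension, exactly as in Figure~\ref{fig:ex}: given a model $\Imf$ of $\Phi$, I extend it to $\Imf'$ by interpreting each $A_k^j$ cyclically so that, at time $i$, one of the three names carries $C_j^{\Imc_i}$, the next carries $(C_j \sqcap \lnot D_j)^{\Imc_i}$, and the previous carries $(C_j \sqcap D_j)^{\Imc_i}$; a direct inspection verifies $2$-rigidity and each conjunct of $\aformtwo_j$. The converse direction is the real work: given a model $\Imf'$ of $\Phi'$ and any $d \in (C_j \sqcap D_j)^{\Imc'_i}$, the first conjunct of $\aformtwo_j$ supplies some $k$ with $\Psi_k^j$ holding at time $i$; the Krom CIs in $\Psi_k^j$ then force $d \in (A_k^j)^{\Imc'_i} \setminus (A_k^j)^{\Imc'_{i+1}}$; the $2$-rigidity of $A_k^j$ compels $d \in (A_k^j)^{\Imc'_{i-1}}$; and finally $\lprevp A_k^j \sqsubseteq E_j$ at time $i$ gives $d \in E_j^{\Imc'_i}$, as required by $\alpha_j$.

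The main obstacle is this converse direction: the gadget does not single out a unique ``current'' $\Psi_k^j$, so the argument must succeed for whichever witness $k$ the disjunction supplies, and it hinges essentially on $\ir(A_k^j) = 2$---only then does membership in $A_k^j$ at time $i$ but not at $i+1$ compel the backward step to $i-1$ that triggers $\lprevp$. The cyclic implications $\Psi_k^j \to \lnext \Psi_{k \oplus_3 1}^j$ also need to be checked to cohere with the shifted interpretations in Figure~\ref{fig:ex}, but this is bookkeeping. Once these points are verified, \ExpSpace-hardness of the target problem transfers from that of the \LTLDLLhorn source.
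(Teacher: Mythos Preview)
Your proposal is correct and follows essentially the same approach as the paper: both reduce from satisfiability in \LTLDLLhorn with global CIs (appealing to \cite{AKLWZ-TIME07:temporalising}), replace each non-Krom CI $C\sqcap D\sqsubseteq E$ by the three-name $2$-rigid gadget~$\aformtwo$, and argue equisatisfiability via the extension in Figure~\ref{fig:ex} for one direction and the $2$-rigidity backward-step for the other. Your write-up is in fact more explicit than the paper's on the converse direction (the paper only says ``it is easy to see''), and your observation that the argument must work for \emph{whichever} disjunct $\Psi_k$ is witnessed is exactly the point.
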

\begin{proof}
	Let \aform be an arbitrary \LTLDLLhorn formula with global CIs, and let $\aform'$ be the formula obtained from \aform by replacing every CI $C\sqcap D\sqsubseteq E$ by a formula $\Psi_{i}$ as above.
	%
	%
	We prove that \aform is satisfiable iff $\aform'$ is satisfiable.
	
	($\Leftarrow$) It is easy to see that this direction holds.
	A model $\Imf'=(\Delta,(\Imc'_i)_{i\in \Zbb})$ of $\aform'$ satisfies, at every time point $j\in\Zbb$, one $\Psi_i$ in such a way that, for each such replacement, we have $e\in A_i^{\Imc'_j}$ if $e\in C^{\Imc'_j}$, $e\in \lprevp \smash{A_i^{\Imc'_j}}$ if $e\in (C\sqcap D)^{\Imc'_j}$ (note the semantics of the interval-rigid names), and thus $e\in (C\sqcap D)^{\Imc'_j}$ implies $e\in E^{\Imc'_j}$.
	
	($\Rightarrow$) We show that every model $\Imf=(\Delta,(\Imc_i)_{i\in \Zbb})$ of \aform can be extended to an interpretation $\Imf'=(\Delta,(\Imc'_i)_{i\in \Zbb})$ such that $\Imf'\models\aform'$. We assume $\Imf'$ to interpret all symbols in \aform the same as \Imf.
	The new names are interpreted such that \dots; $\Imf',-1\models\aformtwo_2$; $\Imf',0\models\aformtwo_0$; $\Imf',1\models\aformtwo_1$; \dots ($*$). 
	This can be achieved by defining $A_i^{
		\smash \Imc'_j}$ for $i\in[0,2]$ and all $j\in\Zbb$ as outlined in Figure~\ref{fig:ex}. Observe that this definition is valid since every element that satisfies one of the new names satisfies it at two consecutive time points. For example, if $e\in A_2^{\Imc'_1}$, which is defined as $(C\sqcap D)^{\Imc_{2}}$, then we also have $e\in A_2^{\Imc'_2}$ since 
	$A_2^{\Imc'_2}=C^{\Imc_{2}}$.
	Lastly, it can readily be checked that ($*$) holds, and thus we get $\Imf'\models\aform'$.
	
We now additionally refer to the proof of \cite[\citethm10]{AKLWZ-TIME07:temporalising} to obtain the \ExpSpace-hardness. More specifically, that proof shows the \ExpSpace-hardness of satisfiability in \LTLDLLhorn by reducing a tiling problem. Without going into further details, it can be seen that the formula describing the tiling is a conjunction of several global CIs and some local ones, but those are in \DLLitecore (see Formulas 22, 26, and 27 in that paper). That is, we can express the tiling in the same way in \LTLDLLkrom w.r.t.\ interval-rigid names.
 Note that the semantics in \cite{AKLWZ-TIME07:temporalising} are over the naturals instead of over the integers, but, as before, this does not change anything and we can use the proof as it is given. 
\end{proof}
Observe that the above CIs in $\Psi_i$ are actually in the Core fragment, and that disjunctions $C\sqsubseteq D\sqcup E$ can be expressed using similar CIs: 
\begin{align*}
C\sqsubseteq{}& A_i&
\lnext A_i\sqsubseteq{}& D&
\lprevp A_{i}\sqsubseteq{}&  E
\end{align*}
Note that these reductions only hold for global CIs; this satisfies our purpose. But, it is likely, that a more complex modeling (e.g., involving counters) could be used to express local CIs in a similar way.

%
In contrast to the above rather negative results, we regain membership in \PSpace by restricting \LTLDLLbool to global CIs. This follows from two facts.
First, recall that interval-rigid concepts can be modeled in the \LTLziDLLbool fragment of \LTLbinDLLbool, and such formulas can be translated in polynomial time into equisatisfiable
LTL$_\DLLite$ formulas (see Sec.~2).
Second, an LTL$_{\DLLitebool}$ formula \aform containing only global CIs can be translated into a propositional LTL formula of size polynomial in $|\aform|$ \cite[\citelem4.3 ff.]{AKRZ-TOCL14:cookbook}.
\cite{AKRZ-TOCL14:cookbook} do not consider assertions combined by arbitrary temporal operators, but since the translation considers assertions directly as propositions and only uses them in a conjunction with other formulas, it is easy to see that it also works for full LTL$_{\DLLitebool}$ restricted to global CIs.

\begin{theorem}\label{thm:ltl-dllbool-global}
	Satisfiability w.r.t.\ interval-rigid names is in \PSpace 
	in \LTLDLLbool restricted to global CIs.\qed
\end{theorem}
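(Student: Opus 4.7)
The plan is to chain together three reductions, each polynomial and each preserving the ``global CIs only'' structure, and then appeal to the standard \PSpace algorithm for propositional LTL satisfiability. Let $\aform$ be the given \LTLDLLbool formula with global CIs, equipped with sets $\NRig$ and $\NIRig$.

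First I would eliminate the (interval-)rigid roles using Lemma~\ref{lem:rig-roles-trivial}; inspecting that reduction, the added conjuncts are themselves global CIs (together with formula-level assertions about role atoms), so the result is still of the form $\boxast\Tmc\wedge\aformtwo$. Next I would eliminate the remaining interval-rigid concepts via the paper's axiomatisation: for each $k$-rigid $A\in\NIRig$, add the global CI $\top\sqsubseteq A\sqcup \lnext(\lnot A\sqcup \ialways{[0,k)}A)$ as a conjunct of $\Tmc$. This introduces $\ialways{[0,k)}$, but only of the restricted form $[0,k)$, so the resulting formula lies in $\LTLziDLLbool$, still with global CIs, and still of polynomial size because $k$ is written in binary but the axiom containing it has constant syntactic size.

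Then I would apply the modular $\LTLziDLLbool\to\LTLDLLbool$ translation mentioned in Section~2 (following \cite[Thm.~2]{BaBoKOT-FroCoS2017}): it runs in polynomial time and explicitly preserves the property of having only global CIs. At this point I have an equisatisfiable $\LTLDLLbool$ formula $\boxast\Tmc'\wedge\aformtwo'$ of polynomial size in the input. Now I would invoke the propositional encoding from \cite[Lem.~4.3 ff.]{AKRZ-TOCL14:cookbook}: for formulas with global CIs, the DL reasoning about a finite set of ``types'' can be compiled into a polynomial-size propositional LTL formula whose satisfiability is equivalent to that of the original. As the paper notes, even though the original source considers a slightly restricted assertion language, the translation treats each assertion as an atomic proposition and combines them only conjunctively with the remaining machinery, so it extends unchanged to assertions occurring under arbitrary LTL operators.

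Finally, the existence of a model for the resulting propositional LTL formula is decidable in \PSpace by \cite{SiCl85:ltlpspace}, and composing the three polynomial reductions with that algorithm yields a \PSpace decision procedure for the original problem. The main conceptual point to verify is that each of the three reductions strictly preserves the ``only global CIs'' restriction; the interval-rigid axiomatisation step is the most delicate, since it introduces the bounded $\ialways{[0,k)}$ operator at the concept level, and one must check that the subsequent $\LTLziDLLbool\to\LTLDLLbool$ translation indeed eliminates these interval operators without turning the global CIs into local ones, which is exactly the modularity property highlighted in Section~2.
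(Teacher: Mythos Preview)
Your proposal is correct and follows essentially the same route as the paper: axiomatise interval-rigid concepts by the global CI $\top\sqsubseteq A\sqcup\lnext(\lnot A\sqcup\ialways{[0,k)}A)$, apply the modular $\LTLziDLLbool\to\LTLDLLbool$ translation from Section~2, and then use the polynomial reduction to propositional LTL from \cite[Lem.~4.3 ff.]{AKRZ-TOCL14:cookbook}. You are simply more explicit than the paper about invoking Lemma~\ref{lem:rig-roles-trivial} first and about verifying that each step preserves the ``global CIs only'' shape, which is exactly what needs to be checked.
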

\section{Rigid and Interval-Rigid Names in~\DLLite-LTL}

We now study those $\LTLDLLbool$ fragments where LTL operators must not occur within CIs. 
We disregard \LTLbin since it is \ExpSpace-hard, and containment is given in 
Thm.~\ref{thm:alc-results}. 
%
%
First, we describe how interval-rigid concepts can be used to simulate concepts of the form $\lnext C$ and $\lprevp C$ 
in every temporal description logic in which \DLLitecore CIs can be combined by LTL operators;
recall that the operators $\iuntil$ and $\isince$ do not occur in the logics we want to simulate.
We describe the approach for $\lnext$, 
it works similarly for $\lprevp$. Note that the idea is proposed in \cite[\citethm8]{BaBoKOT-FroCoS2017}, 
but that reduction uses the features of \ALC.
The idea is to simulate a concept $\lnext C$ using
six fresh concept names. At every time point, all elements are marked with a flexible concept name $A^i$, $i\in[0,2]$, 
and a $2$-rigid concept $A^{i\ominus_3{}1}_{\lnext C}$ 
transfers the fact that 
$C$ is satisfied to the previous time point, where all elements satisfy $A^{i\ominus_3{}1}$. 
%
\begin{align*}
\boxast\bigvee_{0\le i\le 2}&\Bigg(\left(\top\sqsubseteq A^i\right)\land \bigwedge_{\substack{0\le j\le 2,\\i\neq j}}\left(A^j\sqsubseteq\bot\right)\Bigg)\\
\boxast \bigwedge_{\substack{0\le i\le 2}}\left(\left(\top\sqsubseteq A^i\right)\right.&\rightarrow
\left(A^{i\ominus_3{}1}_{\lnext C} \sqsubseteq C\right)
\land \left(C \sqsubseteq A^{i\ominus_3{}1}_{\lnext C}\right)\land{}
\\\label{eq:next}&\left.
\lnext\left( A^{i\ominus_3{}1}_{\lnext C}\sqsubseteq \bot\right)\land
\lnext\left(\top\sqsubseteq  A^{i\oplus_3{}1}\right) 
\right)\end{align*}	
Regarding some time point,
it is easy to see that, in every model of the formulas, all elements satisfy the same concept $A^i$. Hence, the last CI 
guarantees that all satisfy $A^{i\oplus_3{}1}$ in the moment thereafter.
The CI before ensures that the satisfaction of $C$ at this time point implies that $A^{i\ominus_3{}1}_{\lnext C}$ is satisfied at the previous one (where all elements satisfy $A^{i\ominus_3{}1}$), instead of at the one thereafter.
We thus have that an individual satisfies $\lnext C$ iff it satisfies
$A^i$ and $A^i_{\lnext C}$ for some~$i\in[0,2]$.
Hence, we can simulate a concept $\lnext C$ in a formula \aform by replacing every CI \aaxiom by the conjunction $\bigwedge_{0\le i\le 2}\left(\left(\top\sqsubseteq A^i\right)\rightarrow \aaxiom^i_{\lnext C}\right)$, $\aaxiom^i_{\lnext C}$ is 
obtained from \aaxiom by replacing every outermost concept $\lnext C$ by $A^i_{\lnext C}$, recursively, and forming a conjunction with the above formulas. 

Satisfiability in LTL$_{\DLLitehorn}$ 
can thus directly be reduced to satisfiability in \DLLitehorn-LTL w.r.t.\ interval-rigid concepts; and correspondingly for the Krom fragment. 
\begin{theorem}\label{thm:dllkromhorn-ltl-lb}
	Satisfiability in $\DLLite_{\textit{krom/horn}}$-\upshape{LTL} w.r.t.\ interval-rigid names is \ExpSpace-hard. 
	\qed
\end{theorem}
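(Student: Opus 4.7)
The plan is to apply the polynomial reduction sketched in the paragraphs immediately preceding the theorem, which simulates the concepts $\lnext C$ and $\lprevp C$ inside CIs using only global CIs together with $2$-rigid concept names. Given an arbitrary formula $\aform$ in \LTLDLLhorn (respectively \LTLDLLkrom), the aim is to construct in polynomial time an equisatisfiable formula $\aform'$ of \DLLitehorn-LTL (respectively \DLLitekrom-LTL) w.r.t.\ interval-rigid names.

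For each occurrence of a subconcept $\lnext C$ in a CI of $\aform$, I would introduce six fresh concept names: three flexible names $A^0, A^1, A^2$ realising a cyclic global three-state counter, and three $2$-rigid names $A^0_{\lnext C}, A^1_{\lnext C}, A^2_{\lnext C}$ that transfer satisfaction of $C$ to the preceding time point. I add the gadget conjuncts displayed in the excerpt and replace every CI $\aaxiom$ by $\bigwedge_{0\le i\le 2}((\top\sqsubseteq A^i)\to \aaxiom^i_{\lnext C})$, where $\aaxiom^i_{\lnext C}$ is obtained from $\aaxiom$ by substituting each outermost $\lnext C$ by $A^i_{\lnext C}$; the $\lprevp$-case is analogous. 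This procedure is iterated from innermost to outermost to eliminate nested temporal concepts. Crucially, the gadget CIs have the shapes $\top\sqsubseteq A^i$, $A^j\sqsubseteq\bot$, $C\sqsubseteq A^i_{\lnext C}$, and $A^i_{\lnext C}\sqsubseteq C$, which all lie in the Core fragment (hence in both Horn and Krom), and the only LTL operators introduced on the axiom level ($\boxast$, $\lnext$, and disjunction) appear outside CIs, so $\aform'$ is a legitimate formula of the target fragment.

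Correctness is precisely the content of the paragraph preceding the theorem: in any model, the global disjunction $\boxast\bigvee_i(\top\sqsubseteq A^i)\land\dots$ forces all domain elements to agree on the current value of the counter $A^i$ at each time point, and the $2$-rigidity of $A^{i\ominus_3{}1}_{\lnext C}$ together with the equivalence enforced during phase $i\ominus_3{}1$ witnesses at the preceding time point exactly the elements satisfying $C$ one step later. Since different occurrences of $\lnext C$ use disjoint fresh names, their gadgets do not interfere.

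The lower bound now follows by composition: satisfiability in \LTLDLLhorn is \ExpSpace-hard even without rigid names by \cite{AKLWZ-TIME07:temporalising}, and satisfiability in \LTLDLLkrom w.r.t.\ interval-rigid names is \ExpSpace-hard by Theorem~\ref{thm:ltl-dllkrom-lb}. The main subtlety to verify, and what I expect to be the only slightly delicate step, is that iteratively replacing nested occurrences such as $\lnext\lnext C$ or $\lnext\lprevp C$ incurs only polynomial blow-up and that the fresh counters of inner and outer applications remain independent; both are routine because each replaced occurrence is assigned its own private set of six names.
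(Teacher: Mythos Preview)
Your proposal is correct and follows essentially the same route as the paper: the text immediately preceding the theorem already states that the $\lnext$/$\lprevp$-elimination gadget yields a polynomial reduction from \LTLDLLhorn (respectively \LTLDLLkrom with interval-rigid names, via Theorem~\ref{thm:ltl-dllkrom-lb}) to the corresponding \DLLite-LTL fragment with interval-rigid names, and the theorem is then stated with a \qed. The only cosmetic difference is that the paper appears to use a single global three-state counter shared by all occurrences of $\lnext C$, whereas you introduce a private counter per occurrence; both variants are sound and polynomial.
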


We therefore also consider the setting without interval-rigid names. Since satisfiability in \LTLDLLkrom is in \PSpace \cite[\citethm5]{AKLWZ-TIME07:temporalising}, we immediately obtain that \DLLitekrom-LTL is also in \PSpace.
Regarding the setting without rigid symbols, we apply the approach that has been proposed for \EL-LTL \cite[\citethm11]{BoT-IJCAI15}. %
It is based on the fact that 
the satisfiability problem can be split into a satisfiability problem in propositional LTL and several (atemporal) satisfiability problems in description logic \cite{BaGL-TOCL12} (see the proof of 
Lem.~4.3 in that paper). 
The former tests the satisfiability of the propositional abstraction $\pa{\aform}$ of \aform, the propositional LTL formula obtained from \aform by replacing the axioms $\aaxiom_1,\dots,\aaxiom_m$ occurring in \aform by propositional variables $p_1,\dots,p_m$, respectively.
The idea is that the worlds $\dots,w_{-1},w_0,w_1,\dots$ in the LTL model of $\pa{\aform}$ characterize the satisfaction of these axioms in the respective description logic interpretations~$\dots,\Imc_{-1},\Imc_0,\Imc_1,\dots$;
a \emph{world} $w_i$ is the set of those propositions that are true in the model at time~$i$.
To obtain~$\Imc_i$ from~$w_i$, we only have to check the satisfiability of the
conjunction of axioms and negated axioms 
induced by~$w_i$ (i.e., $\aaxiom_j$ is negated iff $p_j\not\in w_i$).
Given $k$ as the number of different worlds occurring in the LTL model, it
is hence sufficient to look for $k$ corresponding description logic interpretations.
Regarding the LTL model, note that every satisfiable LTL formula also has a periodic model where the length of the period is exponential in the input and 
whose existence can be checked by using only space of size polynomial in the input  \cite{SiCl85:ltlpspace}, 
in the way of the quasimodel algorithms.
We however cannot regard the set \as of worlds occurring in the model as a whole in a subsequent description logic test since $k=|\as|$ may be exponential; therefore, we integrate it into that algorithm. That is, for every LTL world guessed when iterating over the period, we check the satisfiability of the conjunction of (negated) \DLLitebool axioms induced by it. Note that this problem is in \NP.
This follows from \cite[\citethm8.2]{dllrelations}, where the complexity of satisfiability of conjunctions of CIs and (negated) assertions is stated to be in \NP.
Such a conjunction can be obtained from our conjunctions 
by replacing each negated CI $\lnot(C_1\sqcap\ldots\sqcap C_m\sqsubseteq D_{1}\sqcup\ldots\sqcup D_{n})$ 
by CIs 
$D_{1}\sqcap\overline{C}\sqsubseteq\bot,
\ldots,D_{n}\sqcap\overline{C}\sqsubseteq\bot,$ and assertions 
$C_1(a),\ldots, C_m(a),$ $\overline{C}(a)$; $\overline{C}$ and $a$ are fresh symbols.
\begin{theorem}\label{thm:dllbool-ltl-norigid-ub}
	Satisfiability in \DLLitebool-\upshape{LTL} 
	is contained in \PSpace, without rigid and interval-rigid names.
	\qed
\end{theorem}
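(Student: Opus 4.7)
The plan is to follow the standard decomposition from \cite{BaGL-TOCL12}: split satisfiability of a \DLLitebool-\upshape{LTL} formula $\aform$ into a propositional LTL part and a family of purely atemporal \DLLitebool satisfiability tests. Concretely, first I would form the propositional abstraction $\pa{\aform}$ by replacing each DL axiom $\aaxiom_j$ occurring in $\aform$ by a fresh propositional variable $p_j$. Because there are neither rigid nor interval-rigid names in the signature, the DL interpretations $\Imc_i$ at different time points need not agree on any symbol, so each world $w_i$ of an LTL model of $\pa{\aform}$ can be realized independently of the others, provided that the conjunction of axioms and negated axioms induced by $w_i$ (with $\aaxiom_j$ negated iff $p_j\not\in w_i$) is \DLLitebool-satisfiable.

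Second, I would check satisfiability of $\pa{\aform}$ using the classical \PSpace procedure for propositional LTL \cite{SiCl85:ltlpspace}, which looks for an ultimately-periodic model of at most exponential length by guessing it one world at a time. The main obstacle is that the set \as of distinct worlds appearing along such a model may have exponential size, so we cannot collect \as first and then perform a single atemporal satisfiability test. The remedy, following the strategy used for \EL-LTL in \cite{BoT-IJCAI15}, is to interleave the two checks: whenever the LTL algorithm guesses the next world $w$ (keeping only the current world, the first world of the period, and the necessary counters in memory), I would immediately verify \DLLitebool-satisfiability of the conjunction of (negated) axioms induced by $w$, and then discard $w$ before moving to the next time point.

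Third, I would argue that each per-world DL test fits into polynomial space. Satisfiability of a conjunction of \DLLitebool CIs and possibly negated assertions is in \NP by \cite[Thm.~8.2]{dllrelations}; the negated CIs $\lnot(C_1\sqcap\dots\sqcap C_m\sqsubseteq D_1\sqcup\dots\sqcup D_n)$ appearing in the abstraction are first rewritten, as described right before the statement, into CIs $D_j\sqcap\overline{C}\sqsubseteq\bot$ together with assertions $C_1(a),\dots,C_m(a),\overline{C}(a)$ for a fresh concept name $\overline{C}$ and a fresh individual $a$. Since $\NP\subseteq\PSpace$, each per-world test uses polynomial space, and since the outer LTL search also runs in \PSpace, the overall interleaved procedure stays in \PSpace. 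The delicate point of the argument is the correctness of the interleaving: it rests on the fact that, without rigid or interval-rigid names, local \DLLitebool-satisfiability of every guessed world suffices to glue the independently chosen $\Imc_i$'s into a single temporal interpretation $\Imf$ witnessing $\Imf,0\models\aform$, which is exactly where the assumption $\NRig=\NIRig=\emptyset$ is used.
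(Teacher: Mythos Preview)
Your proposal is correct and follows essentially the same route as the paper: form the propositional abstraction $\pa{\aform}$ via \cite{BaGL-TOCL12}, run the \PSpace periodic-model search for LTL from \cite{SiCl85:ltlpspace}, and interleave a per-world \DLLitebool satisfiability test (in \NP by \cite[Thm.~8.2]{dllrelations}, after eliminating negated CIs exactly as you describe) in the style of \cite{BoT-IJCAI15}. The independence of the per-world DL tests is justified precisely by the absence of rigid and interval-rigid names, as you note.
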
 

Surprisingly, rigid concepts cause a considerable increase in complexity. Again, the corresponding proof for \EL, a reduction of a tiling problem \cite[\citethm9]{BoT-IJCAI15}, can be adapted. It works similarly w.r.t.\ only interval-rigid concepts since the rigidity of the concepts only needs to last over a number of time points that is exponential in the input. 
%
Containment in \NExpTime can be obtained by integrating the exponentially many description logic satisfiability tests of above into a single one, to guarantee a common interpretation of the rigid symbols. 
Specifically, the $k$ conjunctions of (negated) description logic axioms are considered within one large conjunction, where the flexible symbols are renamed to allow an independent interpretation in each~$\Imc_i$.
%
\begin{theorem}\label{thm:dllhorn-ltl-rigid-lb}
	Satisfiability in $\DLLite_{\textit{horn/bool}}$-\upshape{LTL} w.r.t.\ rigid names is \NExpTime-complete, without interval-rigid names.\qed
\end{theorem}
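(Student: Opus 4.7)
The plan is to establish both directions. For the lower bound, I would adapt the tiling reduction used in [BoT-IJCAI15] for \EL-LTL with rigid concepts to the \DLLitehorn setting. The idea there is to encode an $n \times 2^n$ torus whose cells carry tiles from a finite palette and to enforce the horizontal/vertical compatibility relations of a \NExpTime-hard bounded tiling problem. The axioms encoding a single cell and its neighbours are expressed as global CIs using a collection of concept names that are declared rigid to prevent the cell labels drifting between time points; the LTL layer is then used to synchronously enumerate the $2^n$ horizontal positions via the $\lnextf$-operator on axioms. Since the CIs needed are all of Horn shape (conjunctive left-hand side, single atom on the right, plus $\bot$ on the right for the compatibility clashes), they fit inside \DLLitehorn once every \EL existential restriction $\exists R.C$ is replaced, as in Lem.~\ref{lem:rig-roles-trivial}, by a qualifying concept $C_{\exists R}$ together with global CIs $C_{\exists R}\sqsubseteq \exists R$, $\exists R \sqsubseteq C_{\exists R}$ (declaring $C_{\exists R}$ rigid if $R$ is). This lifts the lower bound to $\DLLitehorn$-LTL, hence also to the Bool fragment.

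For the upper bound, I would follow the world-based decomposition of \cite{BaGL-TOCL12} recalled just before Thm.~\ref{thm:dllbool-ltl-norigid-ub}. Given a \DLLitebool-LTL formula $\aform$ mentioning axioms $\aaxiom_1,\dots,\aaxiom_m$, first test satisfiability of the propositional abstraction $\pa\aform$; by the standard LTL result a satisfying run can be chosen periodic with an exponentially long period, so at most $k=2^m$ distinct worlds $w_1,\dots,w_k$ arise. For each world $w_i$ let $\Phi_i$ be the conjunction of \DLLitebool axioms $\aaxiom_j$ (resp.\ $\lnot \aaxiom_j$) induced by the truth values of the propositions $p_j$ in $w_i$, rewriting each negated CI using a fresh auxiliary concept and assertion as sketched after Thm.~\ref{thm:dllbool-ltl-norigid-ub}. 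Without rigid names the $\Phi_i$ could be tested independently, which is how Thm.~\ref{thm:dllbool-ltl-norigid-ub} gives \PSpace; to guarantee a globally consistent interpretation of $\NRig$ I would instead form one big conjunction
\[
\Phi^\star \;:=\; \bigwedge_{i=1}^{k} \Phi_i[\sigma_i]
\]
where the substitution $\sigma_i$ renames every flexible concept name $A$ and role name $R$ occurring in $\Phi_i$ to fresh copies $A_i,R_i$, while leaving every symbol from $\NRig$ unchanged so that it is shared across all $i$. The size of $\Phi^\star$ is exponential in $|\aform|$, and $\Phi^\star$ is satisfiable as a \DLLitebool conjunction iff the $\Imc_i$ required by the decomposition can all be realised simultaneously while agreeing on the rigid symbols.

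Putting the two components together yields a \NExpTime algorithm: nondeterministically guess an accepting run of $\pa\aform$ of exponential period (in polynomial space per step), collect the guessed worlds to assemble $\Phi^\star$, and decide satisfiability of $\Phi^\star$ in \NP via the cited result \cite[Thm.~8.2]{dllrelations}; \NP on an exponential-size object is \NExpTime. Matching this with the lower bound established above gives \NExpTime-completeness for both the Horn and the Bool variant.

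The main obstacle is the lower-bound side: verifying that the tiling gadgetry of \cite{BoT-IJCAI15}, which originally relies on \EL's qualified existential restrictions to thread information along a grid, still works after replacing those existentials by the flat $\exists R$ concepts of \DLLite together with auxiliary rigid markers. In particular, one must check that the forced successors do not collapse the intended cell structure and that the rigidity propagates correctly through the $C_{\exists R}$ surrogates so that the tiling remains faithfully encoded at every time point.
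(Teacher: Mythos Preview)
Your proposal matches the paper's approach on both bounds: the upper bound via the renamed-flexibles big conjunction is exactly what the paper sketches, and for the lower bound the paper likewise just says the \EL-LTL tiling reduction of \cite{BoT-IJCAI15} ``can be adapted.''

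One technical point in your lower-bound adaptation is off, though. The replacement you describe for $\exists R.C$ is not what Lemma~\ref{lem:rig-roles-trivial} does---that lemma simulates (interval-)rigid \emph{roles} by tying $\exists R$ to a fresh concept $C_{\exists R}$; it never touches the qualifier~$C$. Your CIs $C_{\exists R}\sqsubseteq\exists R$ and $\exists R\sqsubseteq C_{\exists R}$ therefore do not encode $\exists R.C$ at all. The trick the paper actually uses elsewhere (see the proof of Theorem~\ref{thm:ltlbin-dllbool}) is to introduce a fresh role $R_C$ per qualified restriction, add the global CI $\exists R_C^-\sqsubseteq C$, and replace $\exists R.C$ by $\exists R_C$; this stays in \DLLitecore and preserves rigidity if you declare $R_C$ rigid (and then apply Lemma~\ref{lem:rig-roles-trivial}). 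With that correction your adaptation goes through, and the ``main obstacle'' you flag disappears, since the tiling encoding only needs successors whose type is fixed by a range constraint, not the full relational structure.
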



\section{Conclusions}
We have investigated the complexity of satisfiability in several 
metric temporal extensions of \DLLite and, in particular, 
described the influence of rigid and interval-rigid symbols. 
It turned out that they often yield an expressive power (e.g., conjunction in CIs) and complexity that goes beyond the characteristic properties of the \DLLite fragments and may cause many to yield \ExpSpace complexity results. On the other hand, we identified logics where the complexity of satisfiability is lower than in more expressive DLs, and even obtained \PSpace results for practically interesting formalisms where the axioms that represent domain knowledge hold globally.

In future work, we want to study the influence of role inclusion axioms and of restrictions on the set of allowed temporal operators. 
The problem of query answering and, in that context, data complexity, are also worth to look at, given their relevance in applications.
Interesting would further be a similar study for the DL \EL.
%
%
%


\section{Acknowledgments}
We thank Stefan Borgwardt for many helpful discussions. 
This work is partly supported by the German Research Foundation (DFG) 
within the Cluster of Excellence ``Center for Advancing Electronics Dresden'' (cfaed) in CRC~912. 

%


\bibliographystyle{named}
\bibliography{ref}
%

\end{document}